\documentclass{article}

\usepackage{arxiv}

\usepackage[utf8]{inputenc} 
\usepackage[T1]{fontenc}    
\usepackage{hyperref}       
\usepackage{url}            
\usepackage{booktabs}       
\usepackage{amsfonts}       
\usepackage{nicefrac}       
\usepackage{microtype}      
\usepackage{lipsum}		
\usepackage{graphicx}
\usepackage{doi}

\title{Asteroid Mining: ACT\&Friends' Results for the GTOC 12 Problem}

\date{} 					

\author{ {Dario Izzo}\\
	European Space Technology and Research Centre\\
	Noordwijk 2201 AZ, Netherlands\\
	\texttt{dario.izzo@esa.int} \\
	\And
	{Marcus Märtens} \\
	European Space Technology and Research Centre\\
	Noordwijk 2201 AZ, Netherlands\\
	\And
	{Laurent Beauregard} \\
	European Space Operations Centre~~~~~\\
	Darmstadt, 64293, Germany\\
	\And
	{~~~Max Bannach} \\
	European Space Technology and Research Centre\\
	Noordwijk 2201 AZ, Netherlands\\
	\And
	{Giacomo Acciarini} \\
	Surrey Space Center, University of Surrey\\
	GU27XH, Guildford, United Kingdom\\
	\And
	{Emmanuel Blazquez} \\
	European Space Technology and Research Centre\\
	Noordwijk 2201 AZ, Netherlands\\
	\And
	{Alexander Hadjiivanov} \\
	European Space Technology and Research Centre\\
	Noordwijk 2201 AZ, Netherlands\\
	\And
	{Jai Grover} \\
	European Space Technology and Research Centre\\
	Noordwijk 2201 AZ, Netherlands\\
	\And
	{Gernot Heißel} \\
	European Space Technology and Research Centre\\
	Noordwijk 2201 AZ, Netherlands\\
	\And
	{Yuri Shimane} \\
	School of Aerospace Engineering\\
	Georgia Institute of Technology\\
	Atlanta, Georgia 30332, United States of America\\
	\And
	{Chit Hong Yam} \\
	Mission Design and Operations Group, ispace inc.\\
	Sumitomo Fudosan Hamacho Building 3F, 3-42-3\\
	Nihonbashi Hamacho, Chuo-ku, Tokyo, Japan 103-0007
}


\hypersetup{
pdftitle={Asteroid Mining: ACT\&Friends' Results for the GTOC 12 Problem},
pdfsubject={cs.AI, physics.space-ph},
pdfauthor={Dario Izzo, Marcus Märtens, Laurent Beauregard, Max Bannach, Giacomo Acciarini, Emmanuel Blazquez, Alexander Hadjiivanov, Jai Grover, Gernot Heißel, Yuri Shimane, Chit Hong Yam},
pdfkeywords={trajectory optimization, space, asteroid mining, machine learning},
}


\usepackage{xcolor}
\usepackage{rotating}
\usepackage{subcaption}
\usepackage{graphicx}
\usepackage{amsmath,amsthm,amssymb}
\def\vec{\mathbf}

\newtheorem{theorem}{Theorem}
\newtheorem{lemma}[theorem]{Lemma}

\newcommand\Class[1]{%
  \mathchoice%
      {\text{\normalfont\small$\mathrm{#1}$}}%
      {\text{\normalfont\small$\mathrm{#1}$}}%
      {\text{\normalfont$\mathrm{#1}$}}%
      {\text{\normalfont$\mathrm{#1}$}}%
}

\usepackage{siunitx}
\usepackage{tikz}
\usetikzlibrary{graphs, calc, arrows.meta, decorations.markings, shapes, positioning, backgrounds}
\definecolor{jade}{rgb}{0.03, 0.47, 0.19}
\definecolor{lightsalmon}{rgb}{1.0, 0.63, 0.48}

\definecolor{coast_color}{RGB}{68, 119, 170}
\definecolor{hop_color}{RGB}{68, 119, 170}
\definecolor{sma_color}{RGB}{102, 204, 238}
\definecolor{peri_color}{RGB}{34, 136, 51}
\definecolor{apo_color}{RGB}{170, 51, 119}

\newcommand\mbhist[2]{%
  \begin{tikzpicture}[
      xscale = 0.7,
      yscale = 0.75,
      axis/.style = {
        draw, semithick, ->, >={[round]Stealth}
      },
      bar/.style = {
        semithick,
        color = peri_color,
        fill  = peri_color!25
      }
    ]

    \begin{scope}[xscale=0.2, yscale = 0.1]

      \foreach \y in {10,20,...,80}{
        \draw[gray, thin] (0,\y) -- (50,\y);
        \node[anchor=east] at (0,\y) {\scriptsize\y};
      }
      
      \foreach \y/\l [count=\x] in {#1} {
        \ifnum\y>0
        \draw[bar] (\x-0.3,0) rectangle (\x+0.3,\y);
        \fi
        \ifnum\numexpr\x/4*4\relax=\x
        \node[anchor=east, rotate=90] at (\x, -.25) {\scriptsize\l};
        \fi
      }
      
      \draw[axis] (0,0) to (0,90);
      \draw[axis] (0,0) to (51,0);

      \node at (25,-12) {\small Collected Mass of the Ships in kg};

      \ifx\empty#2\else
      \node[rotate=90] at (-5,45) {\small Number of Ships with this Mass};
      \fi
    \end{scope}
  \end{tikzpicture}%
}

\newcommand\coloneq{\mathrel{\raise.4pt\hbox{:}{=}}}
\newcommand\eqcolon{\mathrel{{=}\raise.4pt\hbox{:}}}

\usepackage{listings}
\lstdefinelanguage{pseudocode}{
  morekeywords={
    algorithm,method,new,and,
    if,then,else,while,do,repeat,until,seq,
    seqdo,return,call,
    for,pardo,foreach,print,output,input,exit,
    break,loop,end,begin,goto,par,global,local,
    read,write,stop,idle,procedure,function,
    throw,catch
  },
  sensitive=true,
  morecomment=[l]{//},
  morestring=[b]",
  morestring=[s]{``}{''},
}
\lstdefinestyle{pseudocode}{
  language=pseudocode,
  basicstyle=\small\rmfamily,
  commentstyle=\upshape\color{black!50},
  keywordstyle=\bfseries\itshape,
  identifierstyle=\itshape,
  stringstyle=\rmfamily,
  columns=fullflexible,
  mathescape,
  literate={<-}{{$\gets$\ }}2,
  numbers=left,
  numberstyle=\scriptsize\sffamily,
}

\newcommand{\arrow}{\overrightarrow}
\newcommand{\worra}{\overleftarrow}

\begin{document}
\maketitle

\begin{abstract}
In 2023, the 12th edition of Global Trajectory Competition was organized around the problem referred to as \lq\lq Sustainable Asteroid Mining\rq\rq. This paper reports the developments that led to the solution proposed by ESA's Advanced Concepts Team. Beyond the fact that the proposed approach failed to rank higher than fourth in the final competition leader-board, several innovative fundamental methodologies were developed which have a broader application.
In particular, new methods based on machine learning as well as on manipulating the fundamental laws of astrodynamics were developed and able to fill with remarkable accuracy the gap between full low-thrust trajectories and their representation as impulsive Lambert transfers.
A novel technique was devised to formulate the challenge of optimal subset selection from a repository of pre-existing optimal mining trajectories as an integer linear programming problem.
Finally, the fundamental problem of searching for single optimal mining trajectories (mining and collecting all resources), albeit ignoring the possibility of having intra-ship collaboration and thus sub-optimal in the case of the GTOC12 problem, was efficiently solved by means of a novel search based on a look-ahead score and thus making sure to select asteroids that had chances to be re-visited later on.
\end{abstract}

\keywords{GTOC \and low-thrust \and asteroid mining \and machine learning \and mixed-integer linear programming \and nonlinear programming}

\renewcommand{\arraystretch}{1.2}

\section*{Nomenclature}

\begin{center}
\begin{tabular}{|p{.15\linewidth}|p{.75\linewidth}|}
\hline
$\mathbf r$ & spacecraft position vector ($m$) \\\hline

$\mathbf v$ & spacecraft velocity vector ($m/s$) \\\hline

$m$ & mass ($kg$) \\\hline

$\mathbf x = [\mathbf r, \mathbf v]$ & spacecraft state, containing position and velocity. \\\hline

$\mathbf y = [\mathbf x, m]$ & spacecraft state, containing position, velocity, and mass. \\\hline

$t$ & epochs (specific points in time) \\\hline

$T_{max}$ & maximum thrust ($N$) \\\hline

$T$ & time of flight along an asteroid transfer ($s$) \\\hline

$I_{sp}$ & specific impulse ($s$) \\\hline

$\Delta v$ &  impulsive velocity changes ($m/s$) \\\hline

$\mathcal S$ &  a collection of spacecraft (or ships)\\\hline

$\sigma$ &  a spacecraft (or ship) \\\hline

$\mathcal A$ &  the collection of (GTOC12) asteroids \\\hline

$\alpha$ &  an asteroid \\\hline

$a$ & acceleration ($m / s^2$) \\\hline

$\mathbf v, v$ & boldface symbols indicate vectors, the non boldface version indicates their norm. \\\hline

\end{tabular}
\end{center}

\section{Introduction}
Global Trajectory Optimization Competitions (GTOC) \cite{gtoc1} represent a biennial cornerstone event within the international aerospace community, dedicated to addressing the intricacies of interplanetary trajectory optimization. The 12th edition of this well established competition, held in June-July 2023, proposed a challenging design of a \lq\lq sustainable asteroid mining\rq\rq\ mission. 
The problem demanded the concurrent extraction of resources from a set $\mathcal A$ of 60,000 target asteroids, to be accomplished during a fixed 15 years wide window (from 2035-Jan-01 to 2050-Jan-01) by multiple spacecraft. 
The participating spacecraft, dispatched from Earth and possibly flying by Venus and Mars, had to be meticulously designed to maximize the quantity of mined material returned to our home planet. 
A comprehensive exposition of the mathematical intricacies underpinning the problem definition can be found in \cite{gtoc12}, while in this paper we will primarily provide essential definitions and selectively reference these mathematical foundations. For the purpose of clarity, we shall employ the term \lq ship\rq\ interchangeably with \lq spacecraft.\rq\
In the context of the multi-spacecraft asteroid mining mission presented in GTOC12, each ship possesses the capability to deploy a specified number of mining devices onto the asteroids' surface. Furthermore, these ships have the capacity to collect mined resources if a mining device is already in place on the visited asteroid. 
Importantly, each ship is not confined to gathering material exclusively from asteroids where it initially deposited a miner; it can collect resources from asteroids where miners were deployed by other ships. 
We introduce the concept of $n$-ship \emph{ensemble}, defined by a set $\mathcal S := \left\{\sigma_i\right\}$ comprising ships capable of collectively visiting each asteroid precisely twice: once for miner deployment and once for resource collection. 
To uniquely indicate the asteroid visited by each ship, we represent it with an integer corresponding to the asteroid's identification, that is, we consider the set of asteroids given in the competition as $\mathcal A\subseteq\mathbb{N}$, but will also silently identify an $\alpha\in\mathcal A$ with the corresponding celestial object and all its properties.
As an illustrative example, a three-ship \emph{ensemble} might be represented as follows:
$$
\mathcal S_3 = \left\{
\begin{array}{l}
\sigma_1 = [12, 45, 33, 23, 45, 87] \\
\sigma_2 = [49, 9, 12, 84, 33, 23] \\
\sigma_3 = [13, 87, 49, 9, 84, 13]
\end{array}
\right.
$$
while an example of a one-ship \emph{ensemble} (which we will refer to also as a self-sufficient ship) is:
$$
\mathcal S_1 = \left\{
\begin{array}{l}
\sigma_1 = [12, 45, 33, 97, 23, 23, 33, 97, 12 ,45] \\
\end{array}
\right.
$$
While the cooperative aspect of this mining problem (i.e. using high dimensional \emph{ensembles}) is most interesting and eventually allows for mining and returning to Earth more mass, it also adds a great deal of complexity to the problem and relatively good missions can be obtained also using only self-sufficient ships. It is at this point necessary to mention that all the top three teams in the GTOC 12 leaderboard made extensive use of large \emph{ensembles} thus proving their worth and advantages. In this paper, instead, we will present the design of a mining mission that only considers self-sufficient ships (i.e. one-ship \emph{ensembles}) as our attempts to leverage the collaborative aspect of the mining mission were never able to match the quality of the solution we found considering only the simpler case.
The rest of the paper is structured as follows. In Section \S \ref{sec:overall} we outline the overall strategy to design such a mission. 
In Section \S \ref{sec:database} we discuss the creation of low-thrust trajectory databases for the initial and last mission phases, i.e., coming from and returning to the Earth. 
In the following Section \S \ref{sec:lt2impulse} we discuss the differences between low-thrust trajectory \emph{hops} (i.e., transfers between asteroids) and their impulsive representation as Lambert's arcs and present several innovative techniques developed to cancel such a gap. 
In Section \S \ref{sec:phoenix} we discuss the search algorithm used to assemble, independently, self-sufficient ships.
Finally, in Section \S \ref{sec:patch} we discuss the construction of a multi-spacecraft mission collectively satisfying the imposed constraints and we conclude, in Section \S \ref{sec:final_assembled_solution}, describing our final design.

\section{The Overall Strategy}
\label{sec:overall}
Each ship, as detailed in the problem description, has a starting mass $m_0 \le 3000$ kg, a dry mass $m_d = 500$ kg and can deliver a maximum thrust $T_{max} = 0.6$ N with a specific impulse $I_{sp} = 4000$ s. Each ship also carries $I$ miners each with a mass $m_M = 40$ kg.
Assuming an average ship with a propellant mass of $m_p=2000$ kg, the straightforward application of the Tsiolkovsky rocket equation $\Delta V = -I_{sp} g_0 \log\frac{m_f}{m_i}$ returns, for such a ship, a maximum $\Delta V$ capability of roughly $43$ km/s which is delivered, thrusting constantly at the maximum possible level, in around $t_{\mathrm{thrust}} = I_{sp} g_0 \frac{m_f-m_i}{T_{\mathrm{max}}} = 4.14$ years. 
It is thus evident that the available ships are very agile and capable of hopping between multiple asteroids. 
This is true also in view of the actual spatial density of the asteroids, reported in Figure \ref{fig:asteroids_3d_density}, which suggests most of the good transfer opportunities between asteroids to be in the bracket [2.7, 2.9] AU and hence the ships to be mostly thrusting in that area where the Sun gravity is relatively weak. 
A mining mission may be thus divided into three phases which we designed separately and then connected:
\begin{enumerate}
    \item Leaving the Earth: potentially utilizing Mars and Venus fly-bys, each spacecraft reaches a high-density region in the asteroid belt to maximize its opportunities to find favorable inter-asteroid transfers. Ideally, this trajectory should be designed early within the 15-year operational window to facilitate the commencement of mining operations and accumulate the largest possible quantity of material (with a defined mining rate of 10 kg/yr \cite{gtoc12}).
    \item Deploy and Mine: for spacecraft ensembles, an extensive search for sequences of low-thrust transfers between asteroids is performed to link the initial and final trajectory segments.
    \item Returning to Earth: initiated as late as possible within the 15-year operational window and also potentially employing Mars and Venus fly-bys, return trajectories from the asteroids are crafted to allow the return of the maximum amount of mined resources.
\end{enumerate}
In the process of developing our solutions during the competition, a significant portion of our achievements drew upon prior research. We provide a brief overview of these foundational elements.
To devise the low-thrust trajectories and address optimal control challenges, our approach leaned on the integration of a direct approach, as outlined by Sims \cite{sims1997preliminary}, in conjunction with the utilization of the Non-Linear Programming solver SNOPT \cite{gill2005snopt}. This combination proved instrumental in addressing efficiently the needed trajectory optimization problems.

For the task of asteroid selection, a key component in ranking potential target asteroids, we employed the orbital metric approach \cite{hennes2016fast}. This method, extensively tested and validated, was found to yield results consistent with the selections made by the winning trajectory in the GTOC11 competition \cite{zhang2023gtoc}.
In our efforts to construct machine-learning models for the solutions to the Optimal Control Problems, we relied strongly on the PyTorch library \cite{paszke2019pytorch}. PyTorch provided a robust framework for developing and training these models, facilitating the integration of deep learning techniques into our approach.
The PyKEP and PyGMO software packages were also indispensable for the implementation of fundamental astrodynamical routines and optimization methods \cite{izzo2012pygmo}. These software tools streamlined the development of astrodynamics-related components within our solutions.
Additionally, in solving the integer linear programming problems (ILPs) that arose during the asteroid selection phase, we leveraged the capabilities of the open-source SCIP optimization suite \cite{BestuzhevaEA23}. The combination of these tools and libraries enabled us to develop robust, data-driven solutions for  various optimal control problems we defined (see next section) and other critical aspects of the competition.

\begin{figure}[htp]
    \centering
    \includegraphics[width=0.9\linewidth]{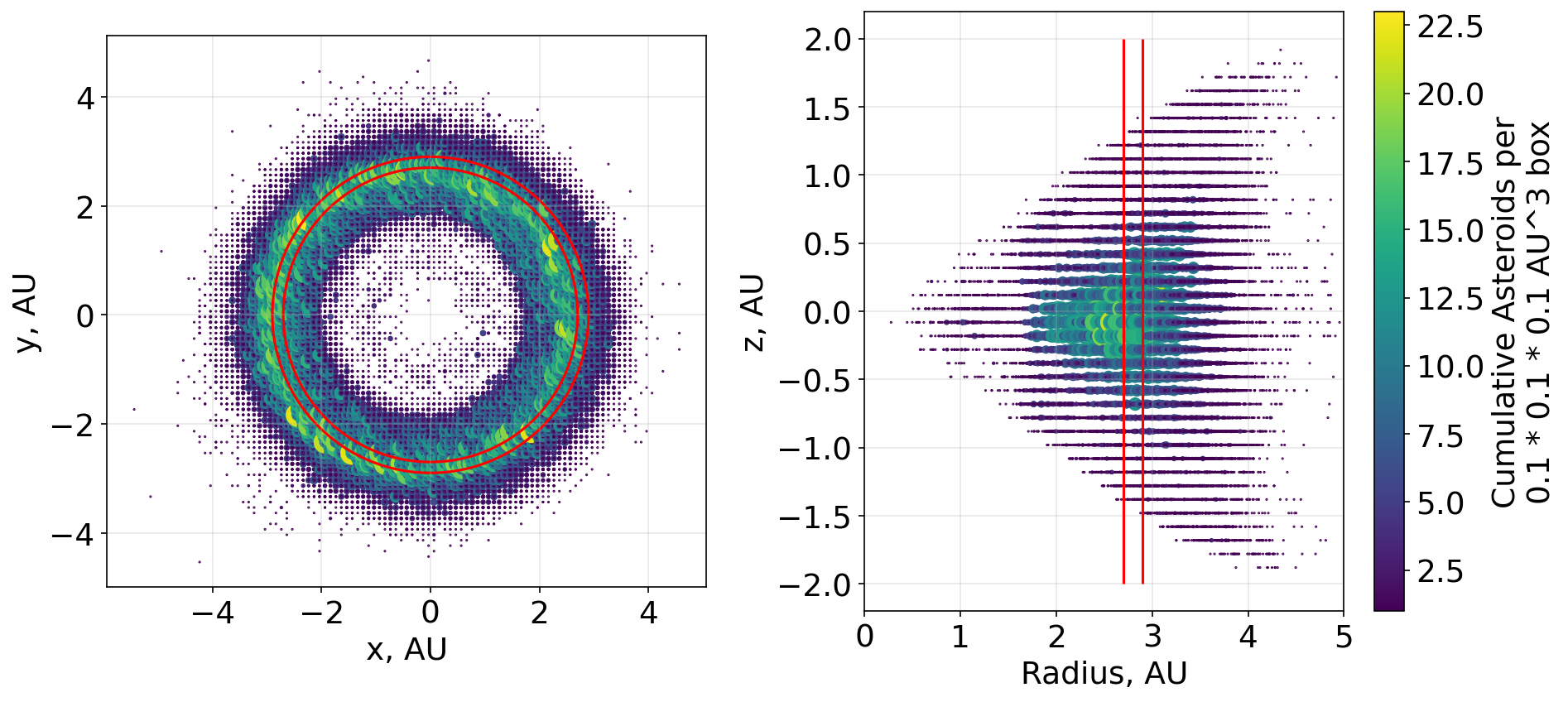}
    \caption{Spatial density of GTOC12 asteroids $x$-$y$ plane (left) and in the radius-$z$ plane (right), calculated by computing the asteroid positions at epochs 69807, 70107, and 70407 MJD corresponding to early mid and late mission epochs, and binning the Cartesian space with cubes of  $0.1$ AU size. Red lines show circular bounds between 2.7 and 2.9 AU.}
    \label{fig:asteroids_3d_density}
\end{figure}

\section{Databases with Departure and Return Options}
\label{sec:database}


A large number of candidate trajectories to bring the spacecraft from the Earth to their first asteroid, as well as the final leg departing the last visited asteroid to return to Earth, were computed and stored as databases to be used in the subsequent search of the \lq\lq deploy and mine\rq\rq\ phase.

Fly-bys with Venus or Mars have been under thorough consideration for both the initial and final legs of the trajectory. Extensive analysis has revealed that employing a Venus fly-by never offers any significant advantage over a direct transfer, primarily due to the fact that the majority of the GTOC12 asteroids are positioned beyond Earth's orbit. The use of a Mars fly-by, instead, may offer a substantial benefit, albeit only in rare cases. This is, for example, shown in Figure \ref{fig:db_scatter_arrival} where some optimal low-thrust transfers appear that use Mars fly-bys, but are outnumbered by similar opportunities not using Mars.

Overall, the relatively high departure and return $V_{\infty}$ allowed (6 km/s) from and to Earth, makes direct transfers particularly beneficial. Furthermore, the utilization of a fly-by generally necessitates precise alignment and phasing of the planets involved and the target asteroid: given the GTOC12 mission's requirements, we have found gravity assists to be mostly detrimental or marginally helpful.

\subsection{A Database of Departure Legs}

For each asteroid $\alpha$, we introduce and solve two distinct classes of optimal control problems, briefly described as follows:
\begin{equation} 
\mathcal P_1:
\left\{
    \begin{aligned}
        \mbox{min:} & \quad t_f
        \\
        \mbox{subject to:} &\quad
        m_f \geq m_{f\min}
        \\
        &\quad t_s \geq 64328 \quad MJD \\
        &\quad ...
    \end{aligned}
    \right.
\qquad
\mathcal P_2:
\left\{
       \begin{aligned}
        \mbox{max:} & \quad m_f
        \\
        \mbox{subject to:} &\quad
        t_f \leq t_{f\max}
        \\ &\quad
        t_s \geq 64328 \quad MJD \\
        &\quad ...
    \end{aligned} 
        \right.
\end{equation}
where, $m_{f\min} = [2300, 2500, 2600]$,  $t_{f\max} = 64328 + [570, 600, 650, 750]$ and direct, Venus, and Mars fly-by trajectory models are considered. The two optimal control problems classes reflect the desire to have both time optimal trajectories that are not using too much mass, and mass optimal trajectories not using too much time. The specific values used for $m_{f\min}$  and $t_{f\max}$ were tuned and finally chosen as we performed experiments during the competition. The modified Julian date of $64328$ corresponds to the Gregorian date 2035-Jan-01, i.e. the start of the mission window. We also use the subscripts $s,f$ to indicate the start and final conditions along the transfer and omit a formal description of all the constraints resulting from the dynamics, the boundary conditions, and the control constraints as they are not important in this context.
We thus created $60,000 \times 7$ optimal control problems that were solved using a simplified direct transcription method based loosely on the well-established Sims-Flanagan approach \cite{sims1997preliminary}. For transfers involving fly-by at Venus or Mars, a pre-screening based on a single deep space manoeuvre (DSM) model \cite{vasile2006preliminary} was conducted to prune for candidate asteroids that are well-phased with the planet in question. 

\begin{figure}[t]
    \centering
    \includegraphics[width=0.9\linewidth]{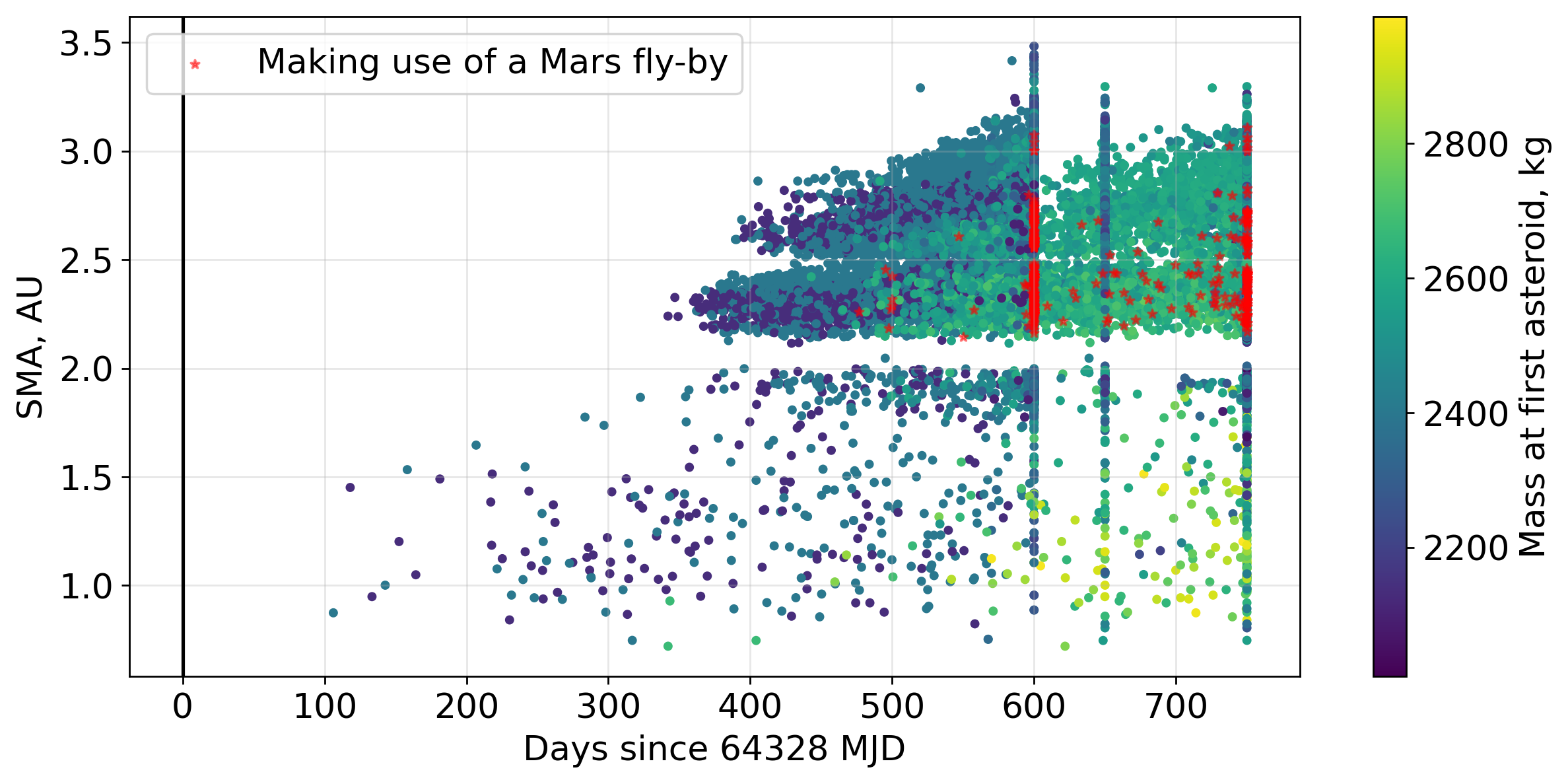}
    \caption{Visualization of the low-thrust transfers included in the departure database~$\mathcal D_{\text{dep}}$. A total of 91,135 non-dominated transfers to 18,504 asteroids were included. The 665 transfers making use of a Mars fly-by are marked. The transfers hitting the bounds on $t_{f\max}$ are clearly visible as vertical lines.}
    \label{fig:db_scatter_arrival}
\end{figure}

The result of this departure leg search indicated that among the 60,000 asteroids available, only a subset of approximately 18,000 exhibited a favorable phasing and were consequently considered as potential initial targets.
All the most promising optimal transfer options (i.e. non-dominated) were stored in a database $\mathcal D_{\text{dep}}\subseteq\mathcal A\times\mathbb R\times\mathbb R$ that contains an element $(\alpha,t,m)$ if there is a departure leg that arrives at asteroid $\alpha$ at time $t$ with a remaining mass of $m$. Figure~\ref{fig:db_scatter_arrival} 
visualizes $\mathcal D_{\text{dep}}$. Only 665 transfers in this non-dominated set made use of a Mars fly-by, and none used Venus.

\subsection{A Database of Return Legs}
As to compute a number of possible return options, for each asteroid $\alpha$, we introduce the following optimal control problem:
\begin{equation} 
\mathcal P_3:
\left\{
    \begin{aligned}
        \mbox{max:} & \quad m_f
        \\
        \mbox{subject to:} 
        & \quad t_s \geq 69807 - \Delta t_{\max} \\
        & \quad t_f \leq 69807 \\
        & \quad ...
    \end{aligned}
    \right.
\end{equation}
The modified Julian date $69807$ corresponds to the Gregorian date 2050-Jan-01, i.e., the end of the mission window. 
In essence, we compute the maximum amount of mined material that can be returned to Earth from each asteroid assuming some initial mass $m_s$ and to depart not earlier than $\Delta t_{\max}$ days from the end of the mission window. 
For favorable return options, this value is expected to decrease, potentially in a nonlinear fashion, as $\Delta t_{\max}$ increases, signifying a gradual loss of asteroid-Earth phasing.
We have solved this optimal control problem for a range of $\Delta t_{\max}$ values, specifically $\Delta t_{\max} = {800, 750, 700, 650, 550, 500}$ days.
Although we initially considered various departure mass values $m_s$, our preliminary findings consistently indicated that the ratio $m_f/m_s$ (i.e., the required $\Delta v$) remained relatively constant for a given asteroid. Consequently, we opted to focus exclusively on one value, namely $m_s=1200$ kg, for the construction of the return database. Formally, the database $\mathcal D_{\text{ret}}\subseteq\mathcal A\times\mathbb R\times\mathbb R$ contains, as the departure database, triplets $(\alpha,t,m)$ indicated that a spacecraft of mass $m$ on an asteroid $\alpha$ at time $t$ can return to Earth (no gravity assists options were found that helped returning more mass to the Earth.). The entries within the return databases are depicted in Figure~\ref{fig:db_scatter_return}.

\begin{figure}[t]
    \centering
    \includegraphics[width=0.9\linewidth]{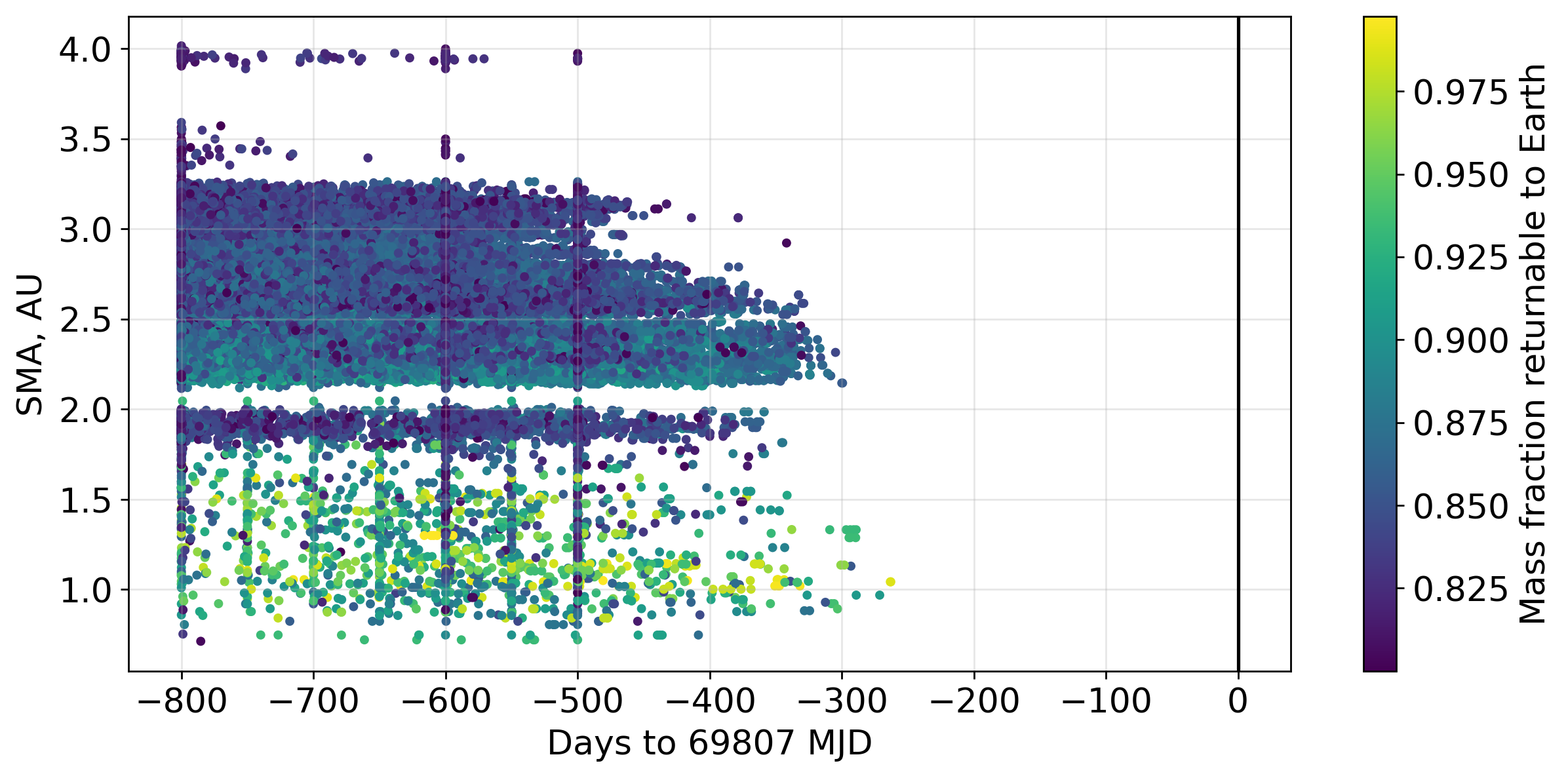}
    \caption{Departure epoch vs semi-major axes for all entries in the return database~$D_{\text{ret}}$, consisting of 190,976 transfers to 59,993 asteroids.}
    \label{fig:db_scatter_return}
\end{figure}

\section{Filling the Gap Between Impulsive and Low-thrust Solutions}
\label{sec:lt2impulse}

To allow for the efficient design of multiple asteroid rendezvous missions utilizing a purely impulsive model, it is imperative to comprehend the difference with its low-thrust counterpart.
Using the impulsive $\Delta v$ to extract an estimation of the actual propellant mass to be used is a widely employed method, but also one well-known to lead to sub-optimal designs. On many occasions, a given transfer might not be feasible at all in low-thrust, since the spacecraft's initial mass, or the desired time of flight, might be too constraining to enable the transfer.
We thus developed several approximations to have a good understanding of whether a certain given spacecraft's initial mass is sufficient to make a fixed-time transfer. We then leveraged the Lambert solution and developed analytical and machine learning (ML)-based approximations that helped us to accurately transition from the high to the low-thrust world, without directly solving an optimal control problem.

\subsection{MIM and MINT: the Maximum Initial Mass and the Minimum Time-of-flight}
\label{sec:mim_and_mint}
Let us consider a rendezvous trajectory between two asteroids $\alpha_s$ and let $\alpha_f$, $t_s$, and $t_f$ be the starting and arrival epochs of the spacecraft at the two asteroids. 
In correspondence to such a fixed time of flight, there will be one well-defined transfer maximizing the initial spacecraft mass. 
This transfer corresponds to a trajectory with no coast arcs since if a coast arc was present in the optimal solution, it would be possible to have an infinitesimally larger initial mass, contradicting the optimality assumption. The optimal control problem defining the maximum initial mass $m^*$ can be formally stated as:

\begin{equation} 
\mathcal P_{\textsc{MIM}}:
\left\{
    \begin{aligned}
            \mbox{given:} & \quad \alpha_s, \alpha_f, t_s, t_f \\
            \mbox{find:} & \quad \mathbf u(t) \\
        \mbox{to maximize:} & \quad m_s
        \\
        \mbox{subject to:} &\quad
        \mathbf{x}(t_s)-\mathbf{x}_{\alpha_s}(t_s)=\mathbf{0}\\
        &\quad \mathbf{x}(t_f)-\mathbf{x}_{\alpha_f}(t_f)=\mathbf{0}
        \\
        &\quad \dot{\mathbf y} = \mathbf f(\mathbf y, \mathbf u)\\
        &\quad |\mathbf u| \le T_{\mathrm{max}}
    \end{aligned}
    \right.
    \text{,}
    \label{eq:optimal_control_problem_mim}
\end{equation}
where $\mathbf{x}$ indicates the state (e.g., a six-dimensional vector expressing the position and velocity in Cartesian coordinates),  $\mathbf{y} = [\mathbf x, m]^T$ indicates the state augmented with the mass, the subscripts $\alpha$ indicate the asteroids and $\mathbf f$ indicates the spacecraft controlled dynamics. 
If the initial spacecraft mass were to be higher than the maximum initial mass $m_s > m_s^*$, then the transfer, also referred to as a \textit{hop}, cannot be performed in low-thrust (i.e., the spacecraft is said to be \lq\lq fat\rq\rq\ as it weighs too much to be able to perform the hop). 
The optimal control problem shown in Eq.~\eqref{eq:optimal_control_problem_mim} can be solved by varying $t_f$: the resulting optimal starting mass $m_s$ will be a continuous curve starting from the origin since to perform a hop with an infinitesimal time of flight an infinitesimally small mass is needed. In Fig.~\ref{fig:mim_vs_mint}, we show an example of such a curve computed solving accurately the corresponding optimal control problems for six different transfers between asteroids of the GTOC12 database corresponding to randomly selected $t_s$ values. 
Given a point on the curve, when the spacecraft's initial mass is below that value, the spacecraft is said to be slim and the transfer is feasible, otherwise, the spacecraft is fat and the transfer is not feasible.
\begin{figure}[htp]
    \centering
    \includegraphics[width=0.85\linewidth]{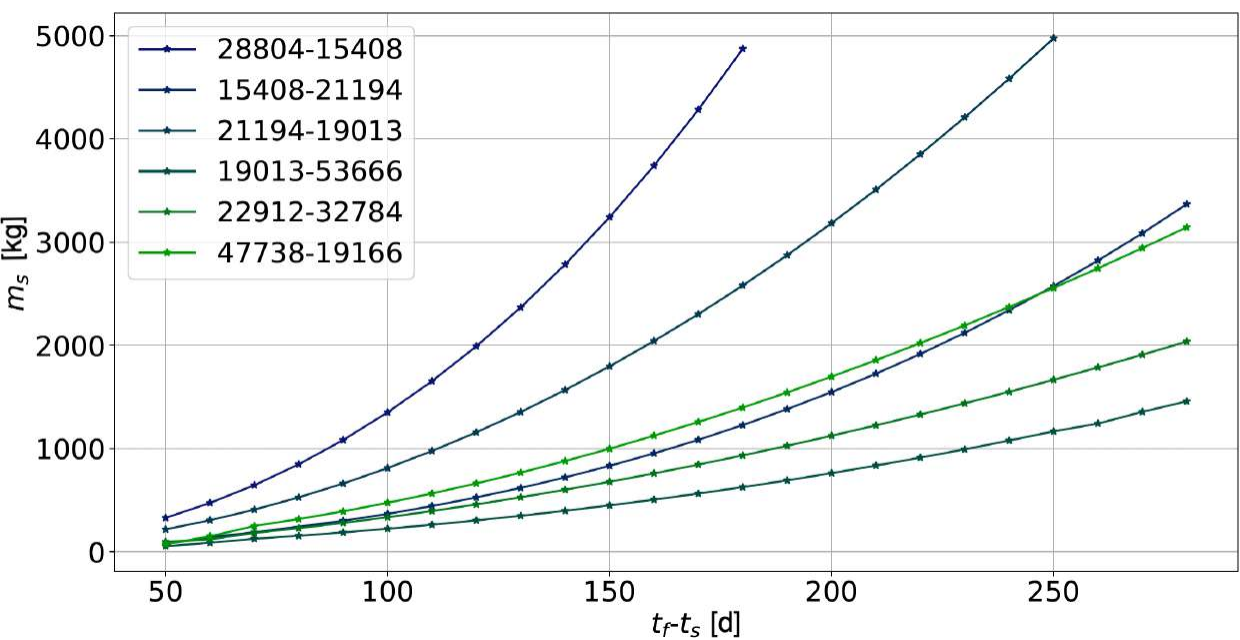}
    \caption{Maximum initial mass curve as a function of time from the start of the mission for different asteroid hops. In the legend, we display the database ID of the departing and arriving asteroids.}
    \label{fig:mim_vs_mint}
\end{figure}

One may now ask the dual question: given an assigned starting mass $m_s$, what is the minimum time of flight our spacecraft can rendezvous with the destination asteroid? The answer indicated with $t_f^*$ and called \textsc{MINT}, is mathematically the solution of the following optimal control problem:
\begin{equation} 
\mathcal P_{\textsc{MINT}}:
\left\{
    \begin{aligned}
            \mbox{given:} & \quad \alpha_s, \alpha_f, t_s, m_s \\
            \mbox{find:} & \quad \mathbf u(t) \\
        \mbox{to minimize:} & \quad t_f
        \\
        \mbox{subject to:} &\quad
        \mathbf{x}(t_s)-\mathbf{x}_{\alpha_s}(t_s)=\mathbf{0}\\
        &\quad \mathbf{x}(t_f)-\mathbf{x}_{\alpha_f}(t_f)=\mathbf{0}
        \\
        &\quad \dot{\mathbf y} = \mathbf f(\mathbf y, \mathbf u)\\
        &\quad |\mathbf u| \le T_{\mathrm{max}}
    \end{aligned}
    \right.
    \text{,}
    \label{eq:mint}
\end{equation}
To solve this problem, one may simply invert the curve $m_s^*(t_f)$ into $t_f^*(m_s)$, thus finding the intersection of a horizontal fixed mass line with the plot in Figure~\ref{fig:mim_vs_mint}. Computing the \textsc{MIM} and \textsc{MINT} from their definitions as solutions to optimal control problems is computationally demanding and is likely to be not useful. Hence, we propose a few fast approximators, considering that for the problem of interest, we can expect most thrust arcs to happen in less than a complete revolution. 

\subsection{Analytical Approximations}
\label{sec:analytical_approximations}
Consider the low-thrust transfer from $\alpha_s$ at epoch $t_s$ to $\alpha_f$ at epoch $t_f$. As a naive approximation for the maximum initial mass $m^*$, we can compute a total $\Delta v$ solving the Lambert problem with transfer time $T$ and summing the two resulting velocity discrepancies. One can then set:
\begin{align}
    m^* \approx \frac{T_{\text{max}}T}{\Delta v}
    \text{,}
\end{align}
However, this approximation is not very accurate. 
The maximum initial mass approximation (\textsc{MIMA}) was developed with the idea of improving the above formula with a more sensible approximation~\cite{hennes2016fast}. 
To derive it, the motion of the spacecraft traveling from $\alpha_s$ to $\alpha_f$  is described as a variation with respect to a corresponding Lambert transfer defined by the two velocity impulses. 
It is assumed that the spacecraft is subject to a piecewise constant acceleration (both in magnitude and direction): this configuration, at the basis of the original developments to derive the \textsc{MIMA} summarized here, is depicted in Figure~\ref{fig:mima}. 
By imposing that the two accelerations are equal in magnitude and that they need to provide a total velocity change that corresponds to the Lambert cost, the problem can be formulated and solved to obtain the expression for the thrust magnitudes and directions, as well as the switching time. 
This results in the following approximation for the maximum initial mass~\cite{hennes2016fast}:
\begin{equation}\label{eq:mima}
    m_{\textsc{MIMA}}^*=2\dfrac{T_{\mathrm{max}}}{a_D}\bigg(1+\textrm{exp}\bigg( \dfrac{-a_D T}{I_{sp}g_0} \bigg)\bigg)^{-1}
    \
    \text{,}
\end{equation}
where $T$ is the time of flight, and $a_D$ is the magnitude of the acceleration of each of the two segments. 

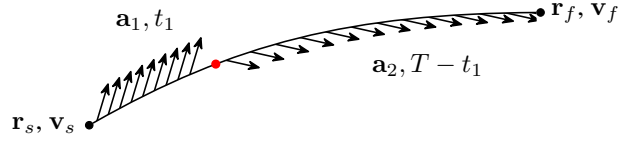
\begin{figure}[t]
    \centering
\begin{tikzpicture}[
    xscale = 1.5,
    trajectory/.style = {
      semithick
    },
    thrust/.style = {
      semithick,
      ->,
      >={[round]Stealth}
    },
    dot/.style = {
      draw, fill,
      circle,
      inner sep = 0pt,
      minimum width = 0.1cm
    },
    markeddot/.style = {
      dot,
      color = red
    }
  ]
  \path[use as bounding box] (-.75,-.5) rectangle (5,2);

  \node[dot] (s) at (0,0) {};
  \node[left of=s, node distance=0.6cm] {$\mathbf r_s$, $\mathbf v_s$};
  
  \node[dot] (t) at (4,1.5) {};
  \node[right of=t, node distance=0.6cm] {$\mathbf r_f$, $\mathbf v_f$};

  \newcounter{thrustnode}
  \draw[
    decoration={markings, mark=between positions 0.01 and 0.99 step 0.028
      with {\node[circle, draw=none, inner sep=0pt, minimum size=0pt] (thrustnode\thethrustnode) {};
        \stepcounter{thrustnode}}},
    postaction={decorate},
    trajectory
  ] (s) to[bend left=20] node[pos=0.3, markeddot] (x) {} (t);  

  \begin{scope}
    \node at (0.5,1.4) {$\mathbf a_1, t_1$};
    \path[clip] (s) to[bend left=20] (t)
    -- ($(t)+(0,2)$) 
    -- ($(s)+(0,2)$)
    -- cycle; 
    \foreach \i in {0,1,2,...,8}{
      \draw[thrust] ($(thrustnode\i)-(0.1,0.5)$) to ($(thrustnode\i)+(0.1,0.5)$);
    }
  \end{scope}

   \begin{scope}
    \node at (3,0.8) {$\mathbf a_2, T-t_1$};
    \path[clip] (s) to[bend left=20] (t)
    -- ($(t)-(0,2)$) 
    -- ($(s)-(0,2)$)
    -- cycle;  
    \foreach \i in {11,13,...,35}{
      \draw[thrust] ($(thrustnode\i)-(0.3,-0.1)$) to ($(thrustnode\i)+(0.3,-0.1)$);
    }
  \end{scope}

\end{tikzpicture}
    \caption{To approximate the maximum initial mass (MIM) we find an analytical representation for a low-thrust transfer that employs a piecewise constant thrust. All variables $\mathbf a_1, \mathbf a_2, t_1$ can be found expanding the solution around the corresponding Lambert problem (i.e. a ballistic transfer from $\mathbf r_s$ to $\mathbf r_f$ in $T$).}
    \label{fig:mima}
\end{figure}

While \textsc{MIMA} is quick to calculate and provides a much better estimation of the feasibility of the jump compared to the naive approach, it still represents an idealized and impractical transfer that does not accurately represent reality. We thus developed a second approximation (aptly named \textsc{MIMA2}) incorporating first-order information about the trajectory. The derivation starts with the same spirit as for \textsc{MIMA} and is summarized below.
The spacecraft state $\mathbf x_K=[\mathbf{r}_K,\mathbf{v}_K]^\mathsf{T}\in\mathbb R^6$ along a ballistic arc connecting the positions of two asteroids in the time $T$ is the solution to the (Keplerian) differential equation:
\begin{align}\label{eq:kepler}
\dot{\mathbf x}_K = \mathbf F(\mathbf x_K)
\end{align}
where $\mathbf F = [\mathbf{v}_K,-\mu\mathbf{r}_K/r_K^3]^\mathsf{T}$. 
Executing this ballistic transfer requires a ship to perform two impulsive maneuvers, denoted as $\Delta \mathbf v_s$ and $\Delta \mathbf v_f$, at the initiation and conclusion of the arc. The magnitudes of these maneuvers are readily calculated to align the departure and arrival orbital velocities with those along the ballistic arc, which is determined as a solution to the Lambert problem \cite{lambert}.

Consider the spacecraft dynamics under the effect of a finite thrust. The equations of motion become:
\begin{equation}\label{eq:finite_thrust}
    \dot{\mathbf{x}} = \mathbf F(\mathbf x)
    + \tilde{\mathbf{a}}(t)
\end{equation}
with $\tilde{\mathbf{a}}(t)=[\mathbf 0,\mathbf a(t)]^\mathsf{T}$ and where $\mathbf a(t)$ is the additional, low-thrust induced, acceleration. 
We write the spacecraft motion, during the low-thrust transfer between two asteroids, as a perturbed Keplerian arc and thus write
$
    \mathbf x = \mathbf x_K + \delta \mathbf x.
$ 
Substituting this into Eq.~\eqref{eq:finite_thrust}, Taylor approximating $\mathbf F$ to first order around $\mathbf x_K$ and using Eq.~\eqref{eq:kepler} yields
\begin{equation}
\label{eq:eom_thrust}
    \delta  \dot{\mathbf x} = \mathbf F'(\mathbf x_K)\delta \mathbf x  + \tilde{\mathbf a}(t)
\end{equation}
describing the dynamics of the perturbation vector. The rendezvous conditions with $\alpha_s$ and $\alpha_f$ are enforced by imposing the boundary conditions
$
    \delta \mathbf x(0) = \delta \mathbf x_0 = [\mathbf 0,  -\Delta \mathbf v_s]^\mathsf{T}, 
    \delta \mathbf x(T) = \delta \mathbf x_T = [\mathbf 0, \Delta \mathbf v_f]^\mathsf{T}
$,
where here and in the following a subscript on $\delta \mathbf x$ denotes the evaluation of that quantity at a specific time.
The homogeneous equation (i.e. Eq.~\eqref{eq:eom_thrust} with $\tilde{\mathbf a}=\mathbf 0$) admits the general solution $\delta \mathbf x(t) = \mathbf M(t)\delta  \mathbf x_0$, where $\mathbf M(t)$ is the state transition matrix. We use the variation of parameters technique (see e.g. \cite{battin}) to solve the inhomogeneous equation, hence we consider $\mathbf x_0$ as a time-varying quantity, change its symbol to $\mathbf z(t)$ and write the solution as $\delta \mathbf x(t) =  \mathbf M(t)  \mathbf z(t)$. Substituting this expression back into Eq.\eqref{eq:eom_thrust}, after simplification, we obtain the differential equation for $\mathbf z(t)$:
\begin{align}
\label{eq:eom_y}
    \dot{\mathbf z} = \mathbf M^{-1}(t) \tilde{\mathbf{a}}(t)
\end{align}

Equation~\eqref{eq:eom_y} is an ordinary differential equation for which we can explicitly write the solution at the time $T$ as:
\begin{equation*}
    \delta \mathbf z_T = \delta \mathbf z_0  +\int_{0}^{T}\mathbf M^{-1}(s)\tilde{\mathbf{a}}(s)\ ds
\end{equation*}
which, accounting for the definition of $\mathbf z(t)$ and the identity $\mathbf M_0 = \mathbf I$, becomes:
\begin{align}
\label{eq:sol_an_integral}
    \delta \mathbf x_T = \mathbf M_T\delta \mathbf x_0  + \mathbf M_T\int_{0}^{T}\mathbf M^{-1}(s)\tilde{\mathbf{a}}(s)\ ds
\end{align}
Note how the only unknown in this equation is the acceleration profile $\tilde{\mathbf{a}}(t)$.

We now assume, similarly to what is done in the case of the \textsc{MIMA} derivation, that the low-thrust transfer is made of two constant thrust arcs $\mathbf a_1, \mathbf a_2$ of duration $t_1$ and $T-t_1$. Approximating the resulting two integrals on the right-hand side of Eq.~\eqref{eq:sol_an_integral} by Simpson's rule we obtain,
\begin{equation}
    \delta \mathbf x_T-\mathbf M_T\delta \mathbf x_0 = \frac{\mathbf M_T}{6}\left(\left(\mathbf M^{-1}_0+4\mathbf M^{-1}_{t_1/2}+\mathbf M^{-1}_{t_1}\right)\Delta \mathbf v_1^*+
    \left(\mathbf M^{-1}_{T-t_2}+4\mathbf M^{-1}_{T-t_2/2}+\mathbf M^{-1}_T\right) \Delta \mathbf v_2^*\right) \label{dvs_star}
\end{equation}
where $\Delta \mathbf v_1^* = [\mathbf 0, \mathbf{a}_1 t_1]^\mathsf{T}$, $\Delta \mathbf v_2^* = [\mathbf 0, \mathbf{a}_2 (T-t_1)]^\mathsf{T}$ and where for later convenience we substitute $t_2$ for $T-t_1$ in the subscripts of the second Simpson approximation.

Given a choice for the switching time $t_1$, the above system of equations admits only one solution in the unknowns $\Delta \mathbf v_1^*$ and $\Delta \mathbf v_2^*$ which only appear linearly. This corresponds to the fact that there is only a unique solution to satisfy the boundary conditions when we force the thrust structure into this two-segment piecewise-constant thrust law. We choose the switching time $t_1$ by demanding that the magnitude of the two accelerations be equal and thus solving the equation
\begin{equation}
(T-t_1) |\Delta \mathbf v^*_1| = t_1 |\Delta \mathbf v^*_2|
\end{equation}
iteratively for $t_1$.
We have thus found an approximation for a continuous thrust trajectory that is able to transfer between two generic asteroids in the time $T$ using two constant-thrust segments. An approximation that can be computed at the cost of evaluating the state transition matrix $\mathbf M$ at a few points along the ballistic transfer!

Such a trajectory is a good approximation for the optimal low-thrust transfer that typically results in solving the corresponding optimal control problem for the maximum initial mass. We thus compute the necessary acceleration from
\begin{align}
    a = |\mathbf a_1| = |\mathbf a_2| = \frac{|\Delta \mathbf v_1^*|}{t_1} = \frac{|\Delta \mathbf v_2^*|}{T-t_1}
\end{align}
and approximate the maximum initial mass via the equation:
\begin{equation}
m^*_{\textsc{MIMA2}} = \frac{T_{\mathrm{max}}}{a}
\text{.}    
\end{equation}

In Figure~\ref{fig:mima_mima2_errors}, we display the error expressed as the difference between the maximum initial mass of the solution of the optimal control problem and the approximated one, using either \textsc{MIMA} or \textsc{MIMA2} over around 400,000 low-thrust transfers with maximum initial masses within 700 and 3,000 kg. As we observe, \textsc{MIMA2} has usually smaller errors than \textsc{MIMA}, although its average errors are slightly skewed towards the positive x-axis. Moreover, about 58.34\% of the \textsc{MIMA2} errors fall within the [-50,50] kg range, while this number drops to 53.00\% for the \textsc{MIMA}.
\begin{figure}[htp]
\centering    \includegraphics[width=0.7\linewidth]{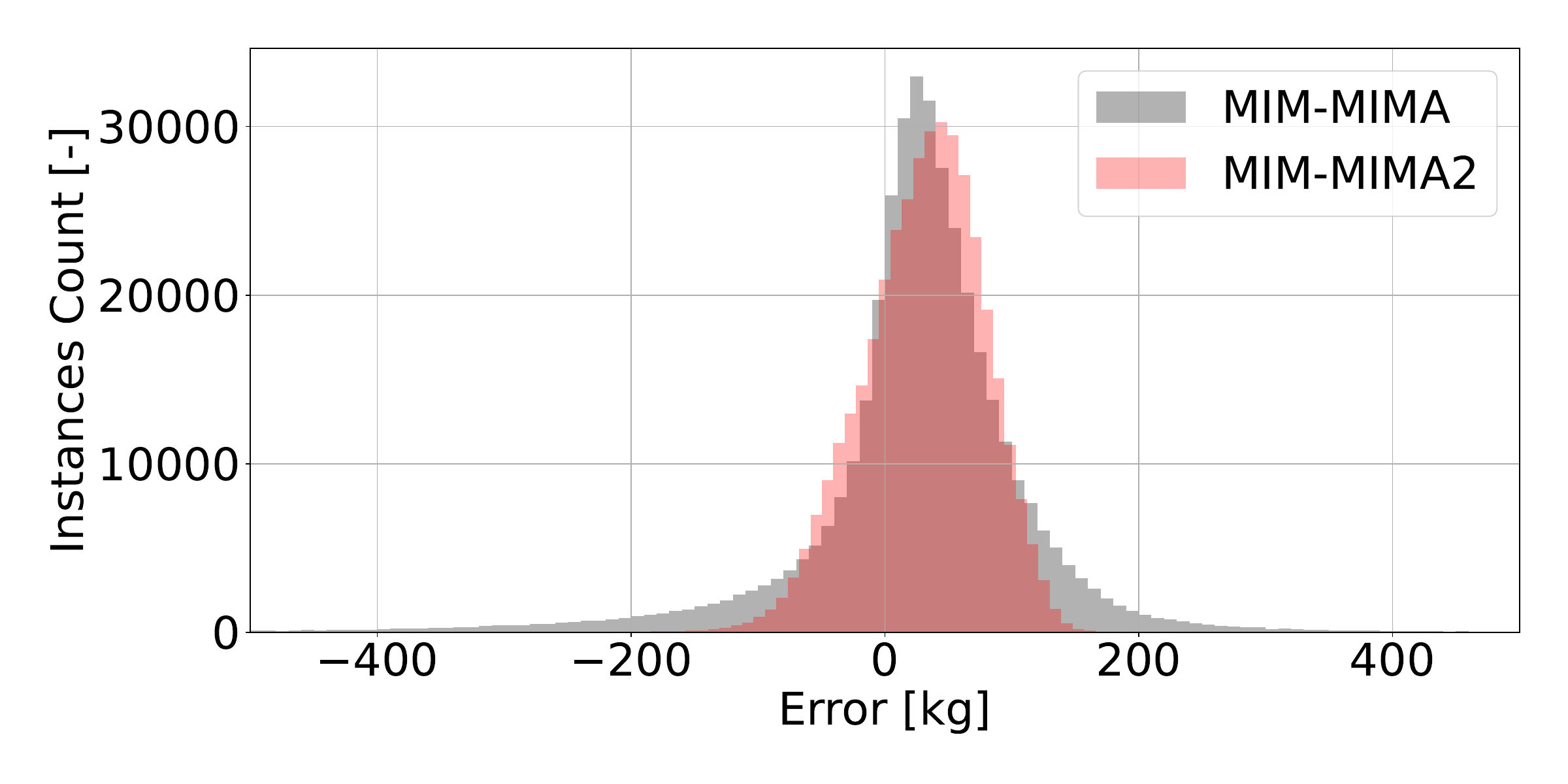}
    \caption{Difference in kilograms between \textsc{MIM} and \textsc{MIMA}, \textsc{MIMA2} approximations.}
    \label{fig:mima_mima2_errors}
\end{figure}

The total required $\Delta v$ can also be approximated for the MIM-MINT transfer using
$
    \Delta v = |\Delta \mathbf v_1^*|+|\Delta \mathbf v_2^*|\label{dv_total}.
$ 
In cases where a spacecraft has a mass $m_s$ lower than $m^*_{\textsc{MIMA2}}$, and optimally transfers in $T$ between the two asteroids, there will be a coasting arc and a lower $\Delta v$ requirement. In that situation Eq.(\ref{dvs_star}) is still valid, but the durations of the thrust arks, $t_1$ and $t_2$, are both unknown as they no longer sum up to $T$. Since we are assuming to know the starting mass, though, we may also assume $a = \frac{T_{\mathrm{max}}}{m_s}$. Thus, we solve the system $\{|\Delta \mathbf v_1^*| = t_1a,\, |\Delta \mathbf v_2^*| = t_2a\}$ for $t_1$ and $t_2$, and approximate the total $\Delta v$ needed for a generic hop, i.e. one that also has a coast arc.

As already discussed in Section~\ref{sec:mim_and_mint}, the maximum initial mass and minimum time of flight are closely related concepts. Using a root solver, one can derive from \textsc{MIMA} and \textsc{MIMA2}, the corresponding approximated minimum time of flights \textsc{MINTA} and \textsc{MINTA2}.

\subsection{Machine Learning Approximations}
\label{sec:machine_learning_approximations}
During the competition, a vast amount of optimal control problems of rendezvous transfers among pairs of asteroids were solved (e.g., to check the validity of the provided approximations). Therefore, it quickly became apparent that supervised machine learning approaches could be used to try and find \textsc{ML}-based approximations of the minimum fuel required for the hop in the low-thrust world. We constructed a database of about 800,000 transfers, which were found by solving the low-thrust fuel-optimal control problem and we trained a supervised neural network to predict the ratio between the final and initial mass (which is directly related to the propellant cost of the transfer, via the Tsiolkovsky equation). The objective was, given a rendezvous transfer between two asteroids, with fixed initial and final time (and therefore time of flight) and fixed initial mass, to find the minimum propellant required by the transfer. 

We used a feed-forward neural network with two hidden layers and 50 nodes per layer. As attributes we included the cost of the Lambert transfer between the two asteroids as well as the time of flight, the initial mass, the difference of their eccentricities, mean motions, and the semi-major axis. A main concern was to ensure that the network treated equally hops between asteroids that were at the same distance from the Sun and were defined by the same transfer geometry. To achieve this, instead of feeding as attributes to the network the position and velocity of the two asteroids in the inertial frame, we used the relative position and velocity and we additionally expressed their coordinates as spherical coordinates in the LHLV frame (Local Horizontal Local Vertical) attached to the departure asteroid.
Furthermore, we also used as attributes both the impulsive velocity vectors derived from the difference of velocity between the Lambert arc solution and the asteroids' velocity (this corresponds to the velocity vectors needed to perform the Lambert transfer that enables the rendezvous). These represent a fast and informative guess of the solution and were found to greatly improve the results. Both were also expressed using the spherical coordinates with respect to the same reference frame as the relative position and velocity. In Figure~\ref{figure:ml_low_thrust}, we display a schematic representation of the neural network architecture and inputs/outputs used in the competition.

\begin{figure}[htp]
  \centering
  \begin{tikzpicture}[
      wire/.style = {
        ->, semithick, >={[round,sep]Stealth}
      },
      neuron/.style = {
        draw=peri_color,
        fill=peri_color!25,
        circle,
        thick,
        baseline
      },
      input/.style = {
        semithick,
        rectangle,
        fill = hop_color!50,
        rounded corners,
        inner sep = 0pt,
        outer sep = 0pt,
        minimum width=1.5cm,
        minimum height=0.6cm,
        baseline
      }
    ]

    \node[input] (box1) at (0, 0)   {$\mathbf{r}_2-\mathbf{r}_1$};
    \node[input] (box2) at (0,-1.5) {$\mathbf{v}_2-\mathbf{v}_1$};
    \node[input] (box3) at (0,-3)   {$\Delta \mathbf{v}_{L,1}$};
    \node[input] (box4) at (0,-4.5) {$\Delta \mathbf{v}_{L,2}$};
    \node[input] (box5) at (0,-6)   {$m_0$};

    \node[neuron,              right = 2cm of box1] (fc1-1) {};
    \node[neuron,              right = 2cm of box2] (fc1-2) {};
    \node[neuron,              right = 2cm of box3] (fc1-3) {};
    \node[neuron, draw = none, fill = none, right = 1.75cm of box4] (fc1-5) {$\vdots$};
    \node[neuron,              right = 2cm of box5] (fc1-4) {};

    \node[neuron,              right = 4cm of box1] (fc2-1) {};
    \node[neuron,              right = 4cm of box2] (fc2-2) {};
    \node[neuron,              right = 4cm of box3] (fc2-3) {};
    \node[neuron, draw = none, fill = none, right = 3.75cm of box4] (fc2-5) {$\vdots$};
    \node[neuron,              right = 4cm of box5] (fc2-4) {};

    \node[right=2cm of fc2-3] (output) {$\dfrac{m_f}{m_0}$};

    \begin{scope}[on background layer]
      \foreach \i in {1,...,4}{
        \foreach \j in {1,...,4}{
          \draw[wire] (box\i)  -- (fc1-\j);
          \draw[wire] (fc1-\j) -- (fc2-\i);
          \draw[wire] (fc2-\i) -- (output);
        }
      }
      \foreach \i in {1,...,4}{
        \draw[wire] (box5)  -- (fc1-\i);  
      }
    \end{scope}
  \end{tikzpicture}    
  \caption{Schematic representation of neural network architecture used to learn the low-thrust transfer cost. Vectors are all expressed in spherical coordinates in the orbital frame of the departure body to ensure rotational symmetry.}
  \label{figure:ml_low_thrust}
\end{figure}
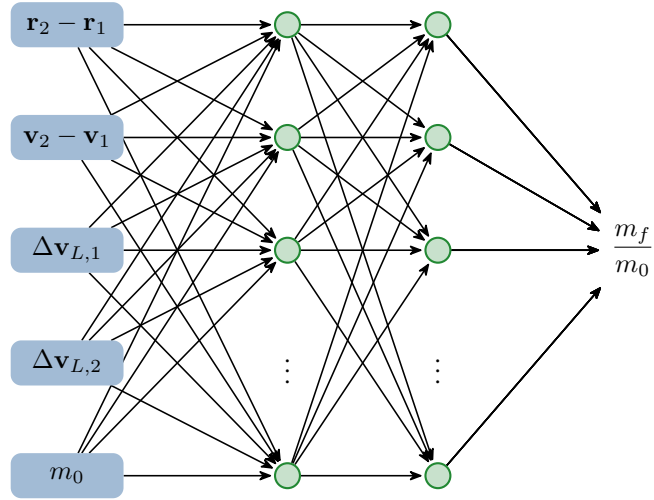

Finally, all the attributes, together with the initial mass and time of flight, were normalized. We trained the network for 150 epochs, using a batch size of 1024 and a learning rate of 0.001. 
In Figure~\ref{fig:ml_vs_analytical_dv_approx}, we show the results of the \textsc{ML} approximation compared to the analytical approximation across the 800,000 transfers in terms of absolute error (in kg) of the predicted mass, compared to the fuel-optimal control problem solution.

\begin{figure}
  \centering
  \begin{subfigure}[b]{0.47\textwidth}
    \includegraphics[width=\textwidth]{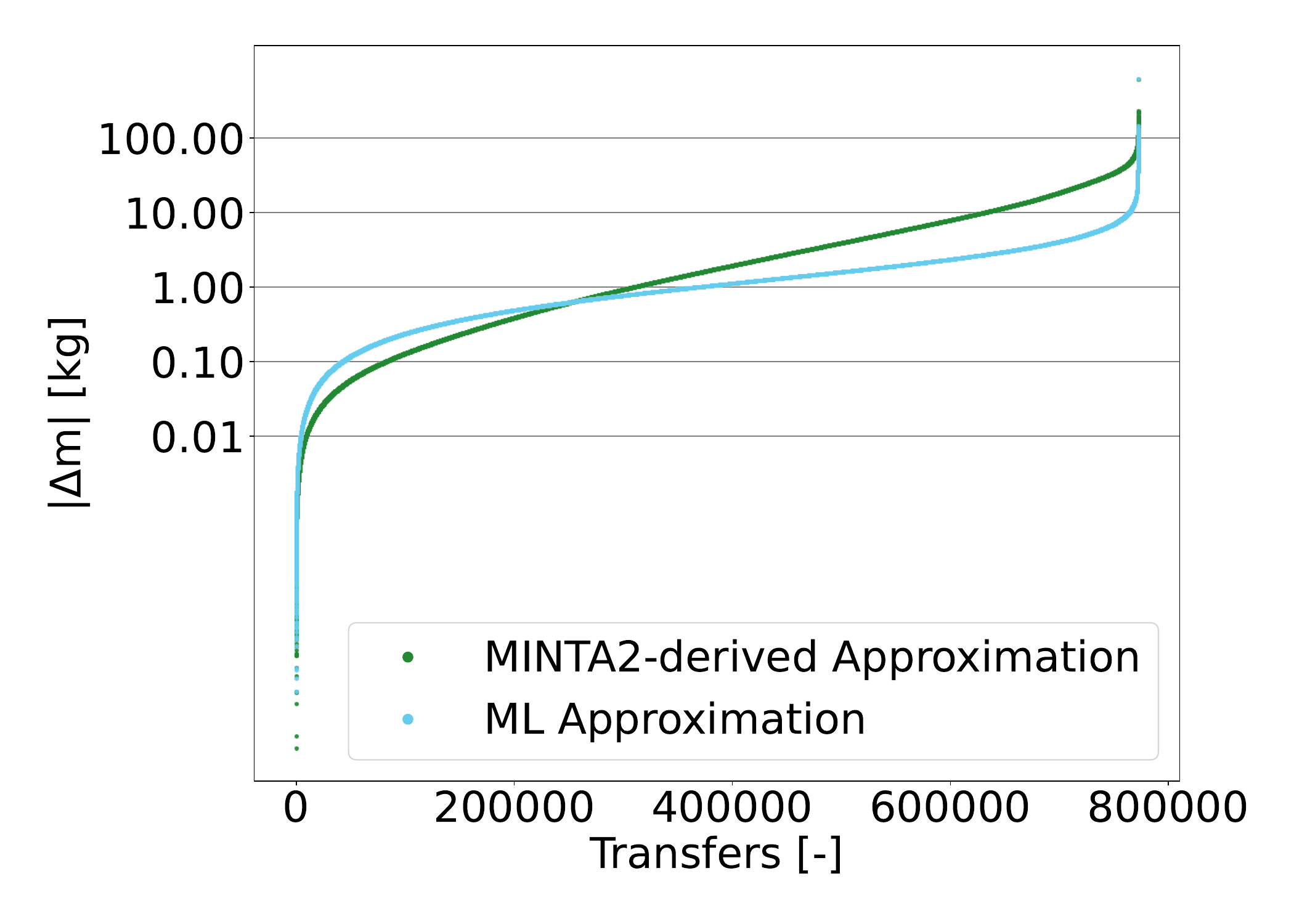}
    \caption{}\label{fig:ml_vs_analytical_dv_approx}
  \end{subfigure}
  \hfill
  \begin{subfigure}[b]{0.48\textwidth}
    \includegraphics[width=\textwidth]{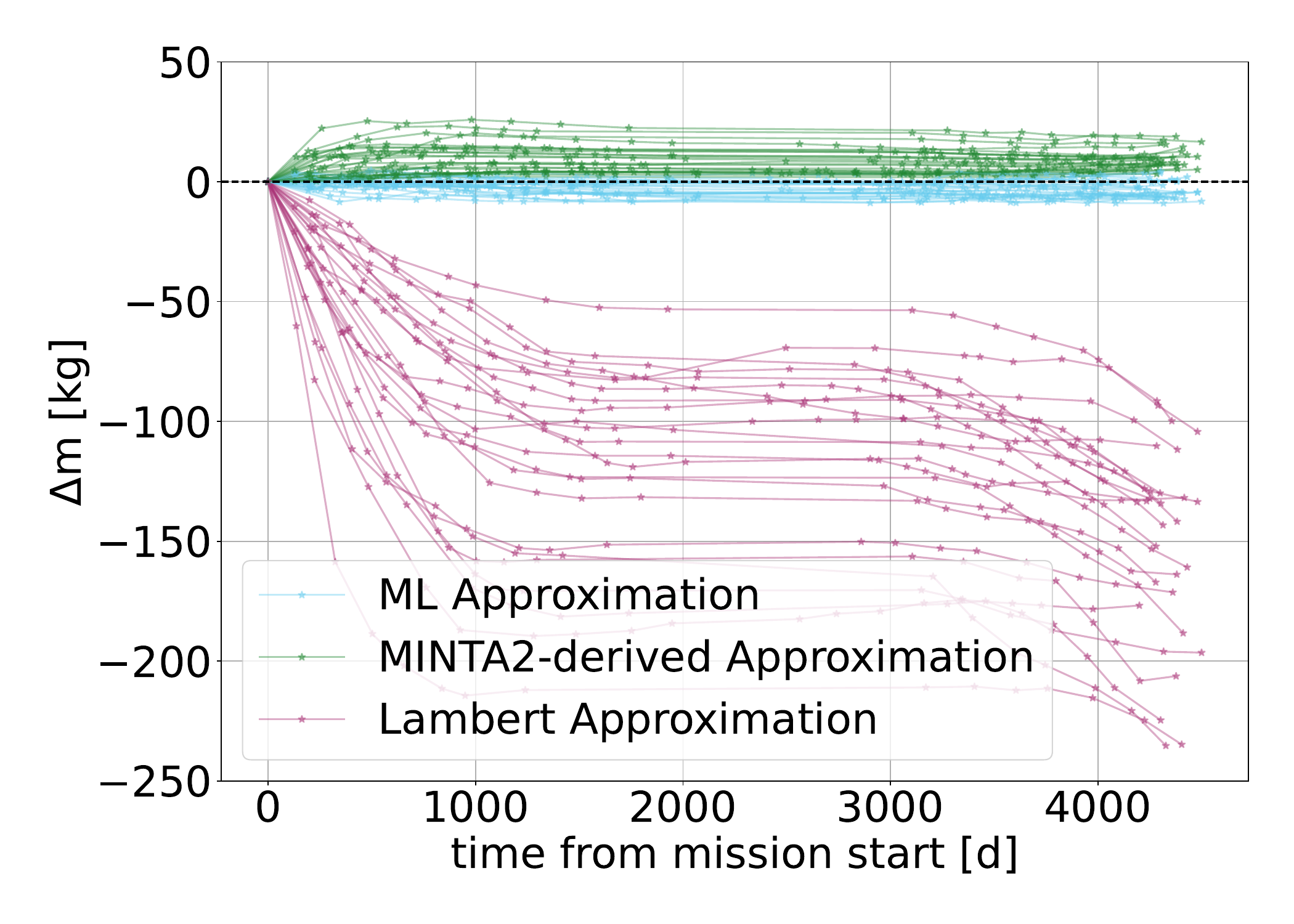}
    \caption{}\label{fig:ml_vs_minta2_vs_lambert}
  \end{subfigure}
  \caption{\textit{Left}: Number of low-thrust optimal transfers approximated within some error using the \textsc{MINTA2}-derived and the analytical approximations. \textit{Right:} Mass error accumulated along the time of flight by 20 different self-sufficient ships when the mass is approximated using Lambert, \textsc{ML}-based approximation and \textsc{MINTA2}-derived approximation.}
\end{figure}

As we can see, for errors below around 800 grams in the final mass, the analytical approximation has the upper hand on the \textsc{ML} approximations. However, and most importantly, above 800 grams of absolute errors, the neural network manages to significantly reduce the errors (e.g. for more than 95\% of the transfers the errors are kept below 10 kg on the final mass). Furthermore, the analytical approximation can have large errors up to 100 kg and would have dire consequences in any search strategy, causing substantial miscalculations on the propellant cost and the overall chain quality.
Furthermore, the inference speed to predict from a hop and an initial mass the final mass is estimated to be about $ \times1.8$ faster for the \textsc{ML} approximator than for the analytical one. Due to the quality and speed of the results, we decided to employ the \textsc{ML}-based approximator to estimate the final mass (and hence the propellant cost) during the tree searches. 
For a comprehensive evaluation of the performance of various implemented approximations, we present in Figure~\ref{fig:ml_vs_minta2_vs_lambert} the mass errors incurred during a sequence of asteroid hops across diverse self-sustaining spacecraft.
Specifically, we analyze 20 ships that made it to our final solution, which will be discussed in detail in Section~\ref{sec:final_assembled_solution}. Each spacecraft undertakes between 16 and 20 hops: we show the error made in the mass computations along the full trajectory using distinct approximators: Lambert approximation, \textsc{MINTA2}-derived approximation, and \textsc{ML} approximation. Notably, the \textsc{ML} approximation applied to approximately 20 spacecraft, exhibits cumulative error values below 10-15 kg even after more than 15 hops. Similarly, the \textsc{MINTA2}-derived one shows marginally higher error values, reaching up to approximately 25 kg.  In contrast, the Lambert approximation yields errors as substantial as 240 kg. Interestingly, the errors in Lambert approximations tend to escalate more prominently in the initial hops, while displaying a more stabilized behavior in subsequent hops. It is essential to note that this observation does not establish a general pattern for all asteroid hop sequences. However, it suggests that the combinatorial algorithm generating these solutions prioritizes time optimality in the initial hops and leans towards mass optimality in the later. In fact, the Lambert solution is expected to exhibit poor performance, especially closer to time optimality.

    
    
    
    

\section{Design of Self-sufficient Ships}
\label{sec:oneship}
An ensemble is here defined as a collection of ships that visit asteroids in such a way that no asteroid is visited more than twice. One ship ensembles, here also called self-sufficient ships, are sub-optimal for this problem but offer some distinct advantages in general, and is what we ended up designing. 
The search for suitable trajectories for isolated ships can be massively parallelised without the need for synchronisation. 
In addition, it becomes simpler to manage the overall ensemble since any asteroid visited \textit{once} (for miner deployment) can be simply put in a set of asteroids that are out of limits to all other ships. 
The flip side is that if the search is executed in parallel, it becomes necessary to devise a method to compile the final solution in order to ensure that the chosen ships have not visited the same asteroid by chance. We have developed and used two different search strategies to design self-sufficient ships, two approaches we call \lq\lq Trajectory scaffolding\rq\rq\ and a \lq\lq A Beam Search that Looks Ahead in Time\rq\rq.

\subsection{Trajectory Scaffolding}\label{sec:scaffolding}
This approach is based on the observation that a ship deploying a miner on an asteroid can remain in the same orbit as the asteroid without expending any fuel -- referred to as `coasting' -- until some suitable point in the future when it can collect the material and return to Earth. Keeping in mind the fact that performing these manoeuvres (deployment, coasting, and retrieval) only \textit{once} is wasteful, we concentrate on finding a sequence of efficient trajectories (Asteroid1 -- Asteroid2 -- Asteroid1, i.e., $\alpha_1, \alpha_2, \alpha_1$) that maximise the returned material; this corresponds to maximising the coasting time as mining time is directly proportional to the amount of mined material. We refer to this type of trajectory as a scaffold.

Let us denote an initial scaffolding manoeuvre to the $i^{th}$ asteroid (Earth -- $\alpha_{i}$ -- Earth) as $E\rightarrow\alpha_{i}\rightarrow{E}$. A ship would be able to perform the manoeuvre and return to Earth with most of its propellant unused. Combining this with the long coasting time provides the ship with the opportunity to perform another scaffolding manoeuvre from $\alpha_{i}$ to another asteroid ($\alpha_{j}$). Provided that the ship can return from $\alpha_{j}$ back to $\alpha_{i}$ \textit{before} the departure time for the leg $\alpha_{i}\rightarrow{E}$, the ship would be able to collect as much material as has been mined during the coasting time at $\alpha_{i}$ as well as the material mined from $\alpha_{j}$. Then, from $\alpha_{j}$ the ship can perform another detour to another asteroid $\alpha_{k}$ and so forth, building a series of round trips -- a \textit{scaffold} -- that has the original $E\rightarrow\alpha_{i}\rightarrow{E}$ round trip as a common base (Fig. \ref{fig:scaffolding}).

\begin{figure}[htp]
    \centering
    \includegraphics[width=0.85\linewidth]{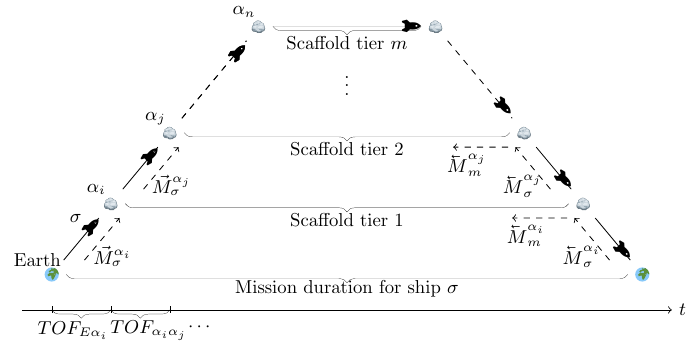}
    \caption{Illustration of the trajectory scaffolding approach. Here, $\alpha_{(\cdot)}$ indicates an asteroid, $\sigma$ indicates a ship, and $\arrow{M}_{\sigma}$ and $\worra{M}_{\sigma}$ denote the total mass of ship $\sigma$ for the forward and backward legs, respectively. The mass estimate for the backward leg is required in order to estimate the time of flight (ToF) for each forward and backward leg with the \textsc{MINT} approximator. Refer to the text for details.}
    \label{fig:scaffolding}
\end{figure}
\subsubsection{Building a Scaffold}

As with all search strategies considered here, when building a scaffold, a ship should arrive from Earth as early as possible and leave back for Earth as late as possible in order to maximise the mining time. In addition, the symmetric nature of the round trip imposes the additional requirement that for each tier $\alpha_{i}\rightarrow\alpha_{j}\rightarrow\alpha_{i}$ in the scaffold, a feasible return hop can be made from $\alpha_{j}$ to $\alpha_{i}$ \textit{before} the scheduled departure time for the hop $\alpha_{i}\rightarrow{E}$. The set of detour candidates $\mathcal{A}_{j}$ are those asteroids that are reasonably close to the current asteroid $\alpha_i$ at both arrival and departure time. We use the orbital indicator \cite{hennes2016fast, Izzo2016Designing} to find suitable asteroids for detours.

The first tier of the scaffold is built by selecting an initial asteroid $\alpha_{i}$ and choosing the forward hop from Earth to the asteroid (${E}\rightarrow\alpha_{i}$) and the hop back to Earth ($\alpha_{i}\rightarrow{E}$). These hops are chosen from among the options in the databases of arrival and return hops developed during the competition. 
The time of flight (ToF) for each hop (forward or backward) is assumed to be the minimum time of flight (MINT) as approximated by our MINTA2 approximator (see \S \ref{sec:analytical_approximations}), with an added slack of $10$ days to avoid solutions that are too aggressive in time optimality. Ultimately, the goal is to maximise the coasting window at the next detour (which is equivalent to minimising the TOF) while also minimising $\Delta V$. However, there is one additional detail: while we know the mass of the ship for the forward hop, we must estimate the starting mass for the backward hop as this is required to estimate the MINT for the return hop.

\begin{lstlisting}[
    style=pseudocode,
    backgroundcolor=,
    label=algo:scaffold,
    float=htpb,
    caption={Algorithm for building a trajectory scaffold $\mathcal{S}$ for a single ship $\sigma$. Here, $C$ denotes a chain of transfers that constitute a scaffold. Here, $\alpha_{i}$ is the first asteroid $M^{\alpha_{i}}_{DB}$ denotes the mass of the ship at the moment of arrival at asteroid $\alpha_{i}$, and $t^{\alpha_{i}}_{in}$ and $t^{\alpha_{i}}_{out}$ denote the times of arrival at and departure from the current asteroid $\alpha_{i}$. The current forward mass and the estimated backward mass of the ship $\sigma$ are denoted as $\arrow{M}_{\sigma}$ and $\worra{M}_{\sigma}$, respectively. $M_{d}$ denotes the dry mass of the ship, and $M^{\alpha_{(\cdot)}}_{m}$ denotes the material collected from asteroid $\alpha_{(\cdot)}$. $S$ denotes the slack (in days) added to the $ToF$ for each hop.}
  ]
  algorithm $\textsf{scaffold}(\alpha_{i},M^{\alpha_{i}}_{DB},t^{\alpha_{i}}_{in},t^{\alpha_{i}}_{out},ToF_{min}=30,ToF_{max}=250,S=10)$
    $C$ <- [] // Chain of scaffolding tiers (excluding hops from and to Earth, which are bootstrapped)
    $\alpha$, $t_{in}$, $t_{out}$, $\arrow{M}_{\sigma}$ <- $\alpha_{i}$,$t^{\alpha_{i}}_{in}$,$t^{\alpha_{i}}_{out}$, $M^{\alpha}_{DB}$
    while $t_{in} < t_{out}$ do
        FW <- Top 500 closest asteroids to $\alpha$ at time $t_{in}$ // Forward transfer candidates
        BK <- Top 500 closest asteroids to $\alpha$ at time $t_{out}$ // Backward transfer candidates
        $\mathcal{A}_{j}$ = FW $\bigcap$ BK // Keep asteroids that are in both sets
        for each $\alpha_{j} \in \mathcal{A}_{j}$
            for $ToF \in (ToF_{min}, ToF_{max})$ // Partition the ToF
                if (forward) // Mass for the forward hop
                    $M_{\sigma}$ <- $\arrow{M}_{\sigma}$
                else // Mass for the backward hop
                    $M_{\sigma}$ <- $\frac{\arrow{M}_{\sigma} + M_{d}}{2}$ + $M^{\alpha_{i}}_{m}$
                $\rho^{\alpha}_M$ <- Mfm0($M_{\sigma}$, $\alpha$, $t_{in}$, $t_{out}$)
                if ($M_{\sigma}$ < MIMA($\alpha$, $t_{in}$, $t_{in} + ToF$) and 0.5 < $\rho^{\alpha}_M$ < 1
                    $\arrow{M}_{\sigma}$ <- $\rho^{\alpha}_M$$\arrow{M}_{\sigma}$ - 40 kg // Depositing a miner reduces the mass by 40 kg
                    $\alpha$ <- $\alpha_{j}$
                    if (forward) // Add the TOF to the time of arrival at $\alpha_{i}$
                        $t_{in}$ <- $t_{in}$ + TOF + S
                    else // Subtract the TOF from the time of departure from $\alpha_{i}$
                        $t_{out}$ <- $t_{out}$ - TOF - S
                    Append hops [$\alpha_{i}\rightarrow\alpha_{j}$ and $\alpha_{j}\rightarrow\alpha_{i}$] to $C$
                    break
    return $C$
\end{lstlisting}

\begin{figure}[htp]
    \centering
    \includegraphics[width = \linewidth]{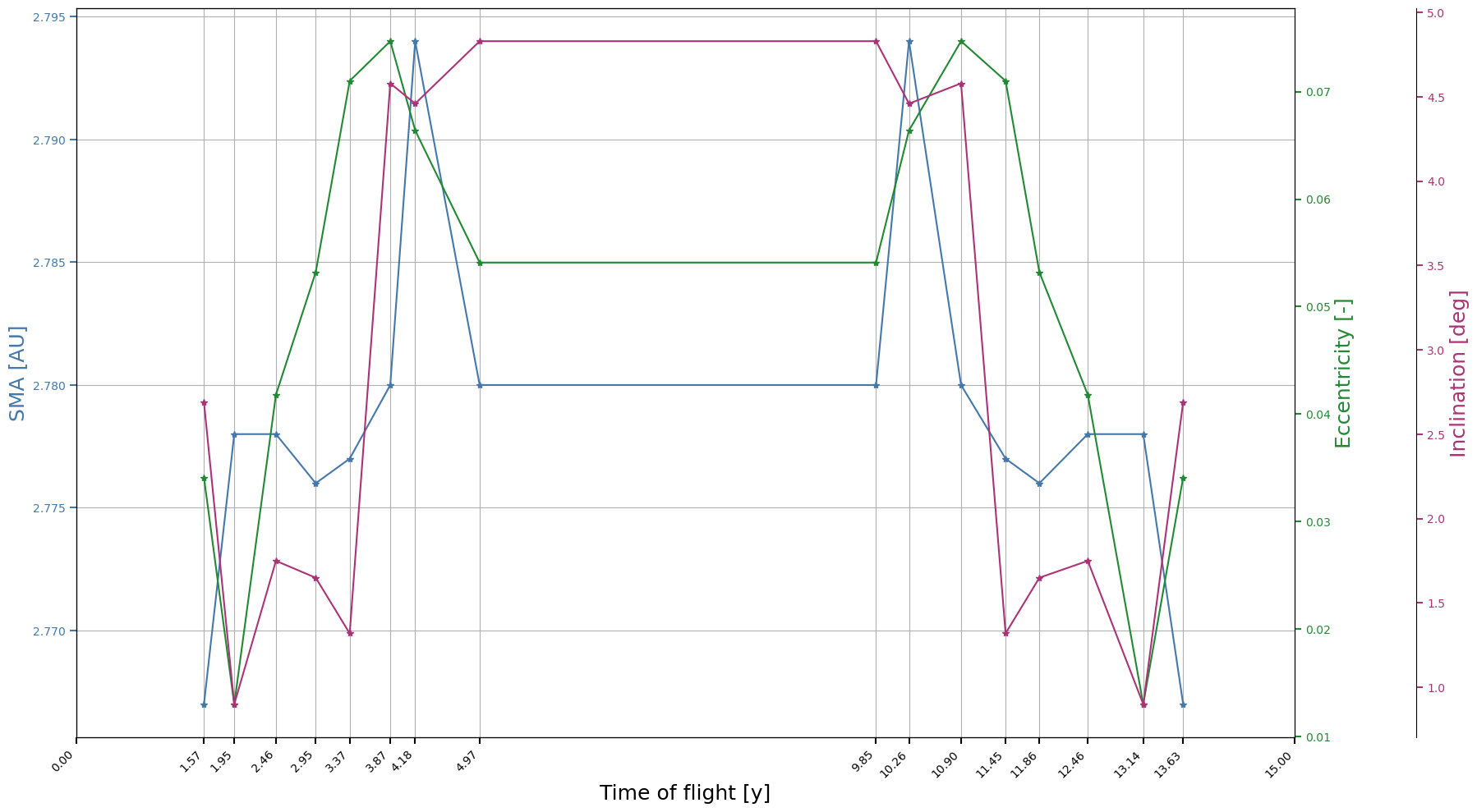}
    \caption{Orbital parameters (semi-major axis, eccentricity, and inclination) and for the orbits of the asteroids visited during the mission window for the chain at the bottom of Fig. \ref{fig:r_of_t_plots}. Note the symmetrical trajectory pattern that emerges naturally from the fact that during the collection phase (after coasting) the asteroids are visited in the opposite order to that of the miner deposition phase.}
    \label{fig:scaffold_example}
\end{figure}

The na\"ive way to estimate the mass of the ship for the backward hop is to subtract the fuel consumed during the forward hop, the mass of a miner, and an amount of material proportional to the coasting time. This can be done because once the ship has arrived at the new asteroid, it is assumed to be coasting without expending any fuel. However, this estimate is in fact quite inaccurate since presumably the ship would be performing multiple detours, thus expending more fuel and accumulating more material along the way, and therefore the actual mass at the end of all detours would look very different from this na\"ive approach.

Therefore, it is necessary to estimate the mass of the ship at each backward hop (Fig. \ref{fig:scaffolding}). Assuming that the ship is currently at asteroid $\alpha_{i}$, the backward mass $\worra{M}^{\alpha_{i}}_{\sigma}$ is estimated as follows based on the current amounts of fuel and material on board:

\begin{align}
    \worra{M}^{\alpha_{i}}_{\sigma} &= \arrow{M}^{\alpha_{i}}_{\sigma} - \frac{M_{f}}{2} + M^{\alpha_{i}}_{m}
\end{align}

Here, $M^{\alpha_{i}}_{\sigma}$ indicates the total current mass of the ship, $M_{f} = \arrow{M}^{\alpha_{i}}_{\sigma} - M_{d}$ is the mass of the fuel and any miners onboard ($M_{d}$ is the dry mass of the ship, fixed at $500~kg$), and $M^{\alpha_{i}}_{m}$ and $M^{\alpha_{j}}_{m}$ are the amounts of material collected at asteroids $\alpha_{i}$ and $\alpha_{j}$, respectively. This estimate assumes that the ship can perform all future detours and return to the current asteroid with half of its remaining fuel, and allows us to determine the feasibility and the ToF for a backward transfer. Note that since we do not know in advance how many detours the ship would be able to make, $M_{f}$ also includes the mass of an undetermined number of miners onboard.

For each new detour, ToFs in the range $(30, 250)~d$ are tested for feasibility\footnote{Naturally, the TOF for the backward hops is \textit{subtracted} from the departure point for the current asteroid.} by checking if the current mass is below the approximated MIM (Eq. \ref{eq:mima}) for the respective ToF. A fixed buffer of $10$ days is added to avoid transfers that were too aggressive and to absorb errors resulting from the ML approximators and the fact that the mass of miners is included in the total estimate of the fuel mass. If the trajectory is feasible, the ratio of the ship's mass after and before the hop is obtained from the ML-based approximator:

\begin{align}
    \arrow{M}^{\alpha_{j}}_{\sigma} &:= \rho^{ij}_{M}\arrow{M}^{\alpha_{i}}_{\sigma}
\end{align}

Here, $\rho^{ij}_{M}$ denotes the ratio of the final to initial mass ($\frac{m_f}{m_{0}}$; cf. Fig. \ref{figure:ml_low_thrust}) for the transfer from asteroid $\alpha_{i}$ to asteroid $\alpha_{j}$. An additional check ensures that $\rho_{M}\in(0.5,1)$ to avoid exceedingly inefficient transfers (cases where $\rho_{M}\le{0.5}$) and erroneous outputs from the ML approximator (cases where $\rho_{M}\ge{1}$). If the checks pass, $\rho_{M}$ is used to compute the mass after the forward hop. This procedure is then repeated with the new asteroid $\alpha_{j}$ as the top tier of the scaffold until no feasible detour can be found. A detour is deemed infeasible when the departure time from the new detour candidate is evaluated to be \textit{before} the arrival time at that candidate, a contradiction. The algorithm for building a trajectory scaffold in this way is given in Listing \ref{algo:scaffold}. 

The resulting trajectory scaffolds are further optimised due to the cumulative estimation error for the backward mass. 
Despite this, scaffolds produced diverse trajectories that proved useful when constructing larger solutions. An example of one such scaffold is illustrated in Fig. \ref{fig:scaffold_example}.

\subsection{A Beam Search that Looks Ahead in Time}\label{sec:phoenix}

An alternative approach, distinct from the one discussed earlier, relies on a modification to the standard Beam Search approach as outlined in \cite{Izzo2016Designing}. In the following, we formally define the problem within the context of graph theory. Subsequently, we will delve into the algorithmic details of our heuristic solution.

A \emph{configuration} of a spacecraft is an element of
$\mathcal{A}\times\mathbb R\times\mathbb R$, that is, an
asteroid at a given time with a given amount of (remaining)
mass. Define the \emph{configuration graph} of the mission as the
directed graph~$D$ with vertex set
\(
V(D) = \big(\mathcal{A}\times\mathbb R\times\mathbb R\big)\mathrel{\dot\cup}\{E_{s}, E_t\}
\)
with $E_s$ and $E_t$ being two distinct vertices that represent the
start and end at Earth. The (directed) edge set of $D$ is defined as:
\begin{align*}
  E(D) &= \{\, (x,y) \mid \text{$x,y\in\mathcal A\times\mathbb R\times\mathbb R$ and there is a \emph{hop} from $x$ to $y$} \,\}\\
  &\qquad\cup\,\{\, (E_s,x) \mid x\in\mathcal{D}_{\text{dep}}\,\}
  \,\cup\,\{\, (x,E_t) \mid x\in\mathcal{D}_{\text{ret}}\,\}.
\end{align*}
Here, a \emph{hop} from $(\alpha,t,m)$ to $(\alpha',t',m')$ is a
trajectory that brings a spacecraft that is rendezvoused with $\alpha$
at time $t$ with a mass $m$ to rendezvous with $\alpha'$ at time $t'$
with a remaining mass of $m'$. Note that the graph $D$ is
infinite. However we may assume that $D$ is given via an \emph{oracle} that tells us whether a given hop is possible, which formally is a function:
\[
\mathrm{hop}\colon\mathcal A\times\mathbb R\times\mathbb R\times\mathcal A\times\mathbb R\times\mathbb R\rightarrow\{0,1\}.
\]
The concept of a \lq\lq magic\rq\rq\ oracle is well known and used in graph theory to formalize the mathematical structure of graph problems without focusing much on the specific problem domain. In our case the oracle encapsulate all space-flight mechanics knowledge that must be used in order to conclude whether a certain randezvous (hop) is actually feasible and can be performed (in which case the edge exists) or not (in which case the edge is not there). The results from Section~\ref{sec:lt2impulse}, as well as some book-keeping of the various masses collected, allow to code efficiently such an oracle in practice, without solving any optimal control problem.
Observe that $D$ is
acyclic, as all edges go ``along with time.''  A path from $E_s$ to
$E_t$ in $D$ corresponds to a \emph{multi-rendezvous} trajectory of a
spacecraft that launches from Earth and returns to Earth at the
end of the mission.

Of course, not all $E_s$-$E_t$-paths are equally
contributing towards the mission's objective. A common desire in multi-rendezvous missions is to
rendezvous with as many asteroids as possible, which means that we
need to find an $E_s$-$E_t$-path of maximum
length such that no two vertices share the same $\mathcal
A$-coordinate. This is a well-known $\Class{NP}$-complete problem, even if
$D$ is finite and entirely given in the input.\footnote{The attentive
reader may notice that $D$ is acyclic and that, in acyclic graphs,
the longest path problem can actually be solved in polynomial
time. This is in general true, however, the additional constraint that
we are not allowed to visit two vertices with the same $\mathcal
A$-coordinate renders the problem $\Class{NP}$-hard again.} The now \emph{de
facto} standard approach for solving multi-rendezvous problems is, thus, a
heuristic version of a best-first search, called \emph{beam
search}~\cite{simoes2017multi}. The beam search explores $D$ starting from
$E_s$ in the following sense: it maintains a list of active
vertices (called the \emph{beam}), which initially contains
$E_s$. Then, the search is performed in phases, whereby
every phase consists of two steps that explore $D$ layer-by-layer:
\begin{description}
  \item[Expansion] first, every element in the beam is replaced by its
    neighborhood in $\mathcal A\times\mathbb R\times\mathbb R$;
  \item[Reduction] then, a \emph{score} is computed for every element
    in the beam and only the top $\beta$-elements are kept in the beam
    (hence, the beam has size at most $\beta$ at the end of every phase).
\end{description}
This process is repeated until no element in the beam has any neighbor
(because the configurations ran out of time or propellant). 
During the search, we keep track of configurations from which there is an edge to
$E_t$ so that, at the end, we can output the best one found. 
The parameter $\beta\in\mathbb{N}$ is called the \emph{beamwidth} and
(roughly) describes the quality of the search. If we set $\beta=1$,
the algorithm becomes a simple greedy search, and with increasing
$\beta$, it converges towards an exact brute-force algorithm. 
Two technical details omitted in the above description are that, first, two configurations could lead to the same successor configuration and, thus, that we may explore the same part of $D$ multiple times. 
This can be avoided by cacheing already explored configurations. 
Second, a vertex in $D$ has an exceedingly large number of neighbors, which we can only partially explore. Hence, some heuristics must be used to reduce the neighborhood during the expansion step.

The parameter $\beta$ (maximum beam size) can usually be determined relatively easily through
experiments, while the heuristic to select neighbors falls into the domain of mission analysis and is heavily problem-dependent. 
In our experiments, for each element in the beam 1,000 target asteroids are selected using the orbital phasing indicator described in~\cite{simoes2017multi}. 
From these, a set of potential neighbours (low-thrust hops) is generated by gridding time from the minimum transfer time MINT (as computed by the MINTA2 approximation) for the following year in months increments. \footnote{The grid [50, 100, 150, 200, 69$\pi$, 250] days was used in the last version of our experiments for reasons that are no longer obvious.}

From an algorithm design perspective, a most interesting parameter of the algorithm is the \emph{score}
function used in the reduction step. If the goal is to ``just'' find a
path that is as long as possible, reasonable scores are the amount of
remaining time or mass, or a weighted combination of the two (see \cite{Izzo2016Designing} for an extended discussion on multi-objective beam search in the context of mission analysis). However,
this reasoning fails for the mission objective of GTOC12:
Since a miner needs to be deployed \emph{and} collected, the ships
need to visit the same asteroid \emph{twice}, at different epochs. This is precisely a
property for which the beam search does \emph{not} work well~--~the
heuristic is supposed to make a locally good decision after all. We
circumnavigated this issue by introducing the score shown in Listing~\ref{algo:yetanotherscore}, which tries to capture how easily the spacecraft can return, at some point in the future, to the same asteroid. 

Formally, let $f$ be a function that
rates the quality of a hop~--~for instance, $f(h)$ could be the 
$\Delta v$ required by the optimal low-thrust transfer for hop $h$ or an approximation (the Lambert or the machine learned one); or it
could be a phasing indicator~\cite{Izzo2016Designing, hennes2016fast} i.e., not accounting for $t'$ at all. 
Let us denote for a hop $h$ the \emph{reverse hop} with $\mathrm{rev}(h)$, i.e., if $h$ is a hop from $\alpha$ at time $t$ to $\alpha'$ at time $t'$, then $\mathrm{rev}(h)$ is the hop from $\alpha'$ at time $t$ to $\alpha$ at time $t'$ (masses are also swapped but are not relevant in what follows). 
Finally let $h\oplus T$ denote, for a hop $h$ and a time duration $T$, the same hop in the future (e.g., $h\oplus 365.25$ would be the same hop in one year).

\begin{lstlisting}[
  style=pseudocode,
  backgroundcolor=,
  label=algo:yetanotherscore,
  float=htb,
  caption={A scoring algorithm for a configuration
    $(\alpha_i,t_i,m_i)$ based on a function $f$ that measures the
    quality of a hop (for instance, the $\Delta v$ of an impulsive
    Lambert transfer). The value $q_1$ is the quality of the hop used
    to get to the configuration, and $q_2$ is the best quality a
    reverse hop has in the feature (at some fixed sampled dates).}
  ]
algorithm $\textsf{future-score}(\alpha_i,t_i,m_i)$
  $h$ <- the hop that lead the beam search to $(\alpha_i,t_i,m_i)$
  $q_1$ <- $f(h)$
  $q_2$ <- $min\{ f( \mathrm{rev}(h)\oplus T\cdot365.25) \mid T\in\{3,\dots,9\} \}$
  return $q_1+q_2$
\end{lstlisting}

Figure~\ref{figure:beamsearch-hist} shows, as an example, the distribution of
collected masses of ships found using the beam search with the score
described in Listing~\ref{algo:yetanotherscore} for the case that $f(h)$
is computed as an impulsive Lambert
transfer (left) and using the orbital indicator~\cite{Izzo2016Designing,
hennes2016fast} (right). Overall, to create a large pool of promising self-sufficient ships, we performed multiple runs of the beam search: the histograms reported in Figure~\ref{figure:beamsearch-hist} are only an example of two runs that were performed at some point during the competition. Our experiments indicated that the scoring function $f$ based on the orbital indicators performed the best on average. This was interpreted as an indication that the reverse hop score benefits from not depending explicitly on $t'$.

\begin{figure}[htb]
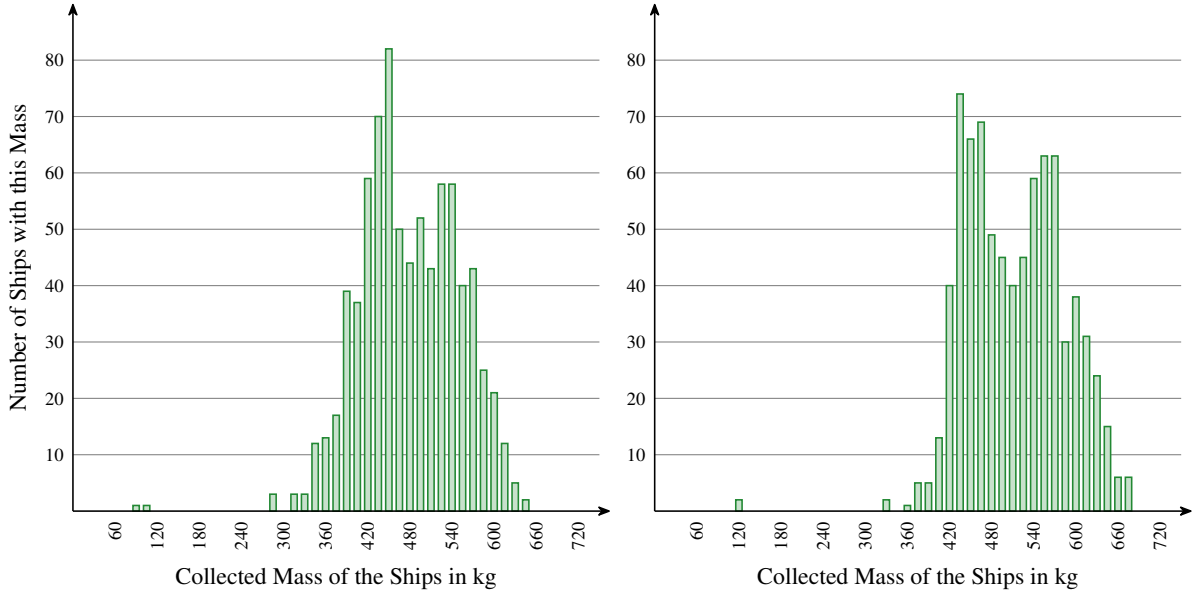

  \centering
    \mbhist{0/15,0/30,0/45,0/60,0/75,1/90,1/105,0/120,0/135,0/150,0/165,0/180,0/195,0/210,0/225,0/240,0/255,0/270,3/285,0/300,3/315,3/330,12/345,13/360,17/375,39/390,37/405,59/420,70/435,82/450,50/465,44/480,52/495,43/510,58/525,58/540,40/555,43/570,25/585,21/600,12/615,5/630,2/645,0/660,0/675,0/690,0/705,0/720}{x}
  \mbhist{0/15,0/30,0/45,0/60,0/75,0/90,0/105,2/120,0/135,0/150,0/165,0/180,0/195,0/210,0/225,0/240,0/255,0/270,0/285,0/300,0/315,2/330,0/345,1/360,5/375,5/390,13/405,40/420,74/435,66/450,69/465,49/480,45/495,40/510,45/525,59/540,63/555,63/570,30/585,38/600,31/615,24/630,15/645,6/660,6/675,0/690,0/705,0/720}{}
  \caption{Distribution of the masses collected by ships found via one run of the beam search with the \textsf{future-score}. In the left plot, we show the distribution for the case that $f$ simply computed the required
  $\Delta v$ of the hop as impulsive Lambert transfer; in the right
  plot, $f$ measures the quality of a hop using the orbital indicator~\cite{Izzo2016Designing,
  hennes2016fast}.}
  \label{figure:beamsearch-hist}
\end{figure}

\section{Selecting the Best Miners Subset}
\label{sec:patch}


Since the algorithm developed in Section~\ref{sec:phoenix} is a heuristic,
we can generate a diverse pool of single ships by varying the initial conditions. All in all, we found 371 \emph{promising} ships, i.\,e., ships that collected a mass of at
least 580\si{kg} and as high as 704\si{kg}. However, not all of these ships were independent, meaning that nothing prevented them from mining the same asteroid. Furthermore, asteroids that were already mined by previously submitted solutions suffered a penalty on their score according to the bonus function (see \cite{gtoc12}). Towards the final days of the competition, both the above factors increasingly countered our efforts to improve the score of our solution. We thus experimented with methods to have our search algorithms (beam search and scaffolding search) produce ships that were guaranteed to avoid these issues. These included the black-listing of single asteroids from the searched data set, but also the forced exploration of different regions of the search space than those to which our algorithms were by design attracted to, via targeted searches on small subsets of $\mathcal A$ containing around 200 asteroids. For instance, we observed a tendency towards approximately circular low-inclination trajectories for our best mass returning ships and hence searched subsets of $\mathcal A$ formed by asteroids with orbits in thin elliptical or inclined tori. As the end of the competition was approaching fast, these methods were ad-hoc and delivered mixed results, but ultimately contributed a few ships to our final solution.

\subsection{Combining Ships to a Full Solution of Maximized Score}
\label{sec:combining}

As explained, the pool of \emph{promising} ships, which exhibited dynamic growth and changes particularly in the concluding days of the competition, encompassed vessels with inherent incompatibilities and significant overlaps with selections made by other teams. 
This complexity rendered the straightforward selection of ships based on a naive collected mass ranking infeasible. 
To address this, we devised a method for identifying the optimal subset of ships from this pool. 
The objective was to maximize the cumulative collected mass, accounting for penalties, while adhering to the crucial non-overlap condition. This intricate optimization problem was effectively mapped to an Integer Linear Program (ILP) for systematic resolution.

Formally, we associate with a ship $\sigma_i$ a tuple $(m_i,s_i,\mathcal A_i)$ with
$m_i,s_i\in\mathbb{R}^+$ being the \emph{collected mass} of the ship and the
\emph{received score} (i.\,e., the mass after applying the bonus function),
respectively. Let further $\mathcal A_i\subseteq\mathbb{N}$ be the set of IDs of asteroids
visited by the ship.

Given a set $\mathcal{S}=\{\,(m_1,s_1,\mathcal A_1),\dots,(m_n,s_n,\mathcal A_n)\,\}$ of ships, our task is to
find a set $I\subseteq\{1,\dots,n\}$ such that:
\begin{enumerate}
\item $\sum_{i\in I}s_i$ is maximized;
\item $\mathcal A_i\cap \mathcal A_j=\emptyset$ for all $i,j\in I$
    with $i\neq j$;
  \item $\frac{1}{|I|}\sum_{i\in I}m_i\geq \frac{\log(|I|/2)}{0.004}$.
\end{enumerate}

That is, $I$ is a selection of ships (which we call an \emph{ensemble})
that maximizes the obtained score, and that is compatible in the sense
that no two ships share an asteroid. The third item is the
\emph{average constraint} imposed by the competition, i.\,e., to use a
certain number of ships, the average collected mass needs a certain
value. Note that we optimize the score while the average constraint
concerns the actual mass. See the left column in
Figure~\ref{figure:ensemble} for an example.

\begin{figure}[htp]
  \centering

  \begin{minipage}[t]{0.3\textwidth}
    \begin{align*}
      \mathcal{S} =\, &\{\\
      &\quad \sigma_1 = (300,1,\{a,b,c\}),\\
      &\quad \sigma_2 = (75,75,\{a,d\}),\\
      &\quad \sigma_3 = (75,75,\{b,e\}),\\
      &\quad \sigma_4 = (100,100,\{c,f\})\\
      &\}
    \end{align*}
  \end{minipage}
  \quad
  \begin{minipage}[t]{0.3\textwidth}
    \begin{tikzpicture}[baseline={(0,.5)}]
      \draw[color=gray, semithick] (-1.5, 0) -- (-1.5,-6);
      \draw[color=gray, semithick] ( 2.5, 0) -- ( 2.5,-6);

      \begin{scope}[yshift=-2cm, scale=1.25]
        \node (s1) at (0,0)  {$\sigma_1$};
        \node (s2) at (1,0)  {$\sigma_2$};
        \node (s3) at (0,-1) {$\sigma_3$};
        \node (s4) at (1,-1) {$\sigma_4$};
        \graph[use existing nodes, edges={semithick}]{
          s1 -- {s2,s3,s4};
        };
      \end{scope}
    \end{tikzpicture}
  \end{minipage}
  \quad
  \begin{minipage}[t]{0.3\textwidth}
    \begin{align*}
      &\text{Maximize}\\
      &1x_1 + 75x_2 + 75x_3 + 100 x_4\\
      &\text{Subject To}\\
      &x_1+x_2\leq 1\\[-1ex]
      &x_1+x_3\leq 1\\[-1ex]
      &x_1+x_4\leq 1\\[-1ex]
      &x_1,x_2,x_3,x_4\in\{0,1\}
    \end{align*}
  \end{minipage}
  \caption{This figure illustrates an example instance of the problem of constructing an ensemble. The left column shows a set of four ships; the middle column shows the corresponding conflict graph (containing an edge between two ships if they use a common asteroid), and the right column shows our base encoding.}
  \label{figure:ensemble}
\end{figure}
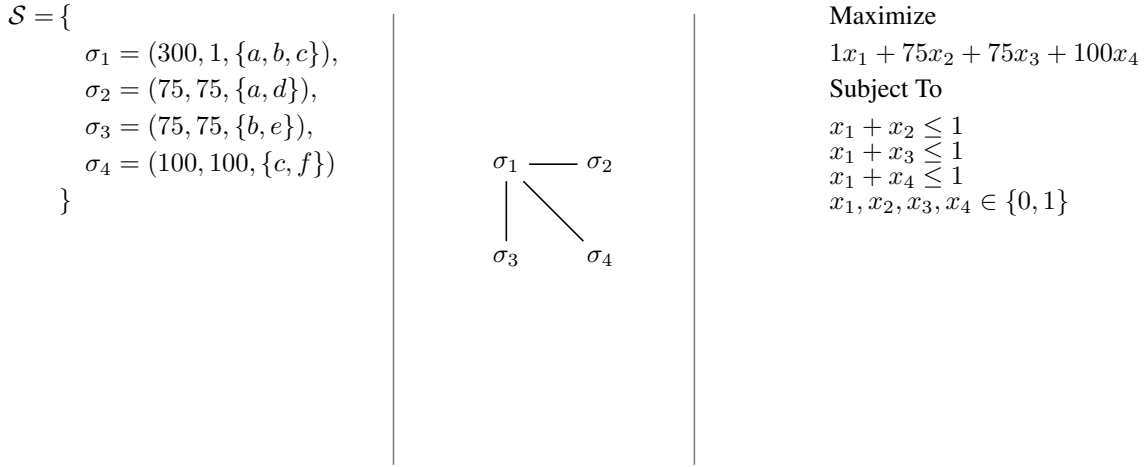

Our primary strategy to solve the problem is by encoding it as an
\emph{independent set} problem in the \emph{conflict graph}. In
detail, we construct a graph $G\coloneq G(\mathcal{S})$ with vertex
set $V(G)=\mathcal{S}$ and edge set
\[
  E(G)=\{\,\{(m,s,\mathcal A), (m',s',\mathcal A')\}\mid \mathcal A\cap \mathcal A'\neq\emptyset\,\}.
\]
The center column of Figure~\ref{figure:ensemble} shows an incarnation
of a conflict graph. Clearly, a score-optimal independent set (i.\,e.,
a set of pairwise non-adjacent vertices that maximizes the
corresponding score) is exactly a solution to the first two items of
the ensemble construction problem. In our running example, the
score-optimal independent set is $\{\sigma_2,\sigma_3,\sigma_4\}$, achieving a score
of 250. The example immediately illustrates the problem with this
reasoning: While being score-optimal, the average mass of this
solution is less than 84\si{kg} and, thus, violates the average constraint (the third
item above). Alternatively, we may find it reasonable to compute a \emph{mass-optimal}
independent set, which, in Figure~\ref{figure:ensemble} is $\{\sigma_1\}$
with a glorious score of~1, which is not the best ensemble we can find.

We solved the problem using a two-pass algorithm utilizing
0-1-linear programming. In the first pass, we computed the maximum
number of ships we could possibly add to the ensemble. To that end, we
constructed the folklore LP for maximum-mass independent sets
by introducing a \emph{binary} variable $x_i$ for every ship
 $\sigma_i\in\mathcal{S}$, maximizing $\sum_{i=1}^nx_im_i$ under the constraints $x_i+x_j\leq 1$ for every two
ships $\sigma_i,\sigma_j\in\mathcal{S}$ that share an asteroid. Let us denote
this LP with $\Pi$. See the right column in
Figure~\ref{figure:ensemble}. To find the maximum number of
simultaneously usable ships, we use the algorithm displayed in Listing~\ref{algo:nships}.

\begin{lstlisting}[
  style=pseudocode,
  backgroundcolor=,
  label=algo:nships,
  float=htb,
  caption={An 0-1-LP-based algorithm to find the maximum
    number of ships in $\mathcal{S}$ that can be added to an ensemble
    simultaneously. The algorithm is based on \emph{row generation}, i.\,e., a sequence of similar LPs is generated by adding additional constraints (in Line~\ref{algo:nships:incremental}) and incrementally solved (in Line~\ref{algo:nships:optimal2}).}
]
algorithm $\textsf{maximum-number-of-ships}(\mathcal{S})$
  $\Pi$ <- $\text{init LP from }\mathcal{S}$
  $\vec x$ <- $\text{optimal solution of }\Pi$ $\label{algo:nships:optimal1}$ // mass-optimal solution ...
  $k$ <- $\sum_{i=1}^nx_i$ // ... with $\commented{k}$ ships
  while $\frac{1}{k}\sum_{i=1}^nx_im_i<\frac{k/2}{0.004}$ do // the solution violates the average-constraint $\label{algo:nships:avg}$
    $\Pi$ <- $\Pi\cup \sum_{i=1}^nx_i<k$ // search for a solution that uses less ships $\label{algo:nships:incremental}$
    $\vec x$ <- $\text{optimal solution of }\Pi$ $\label{algo:nships:optimal2}$
    $k$ <- $\sum_{i=1}^nx_i$
  return $\vec x$
\end{lstlisting}

\begin{lemma}\label{lemma:nships}
  If the algorithm from Listing~\ref{algo:nships} on input $\mathcal{S}$ outputs a solution $\vec x$
  with $k\coloneq\sum_{i=1}^nx_i$, then it is possible to construct
  from $\mathcal{S}$ an ensemble with $k$ ships, but not with $k+1$.
\end{lemma}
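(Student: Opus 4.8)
The plan is to prove the two assertions of the lemma separately: that a compatible ensemble on exactly $k$ ships can be built, and that none on $k+1$ ships exists. The existence direction is essentially a by-product of the termination condition, so I would dispatch it first. Inspecting the returned vector $\vec x$: being a feasible point of $\Pi$, it respects every constraint $x_i+x_j\le 1$, so the selected ships are pairwise asteroid-disjoint; and since the \textbf{while}-guard is false on exit, its $k$ ships satisfy $\frac{1}{k}\sum_i x_i m_i\ge\frac{k/2}{0.004}$, which is exactly the average constraint. Hence $\vec x$ itself is an ensemble on $k$ ships, and only the non-existence of a $(k+1)$-ship ensemble remains.

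For the optimality direction I would first reduce it to a statement about a single cardinality. Write $A(j)$ for the largest average collected mass attainable by a compatible selection of exactly $j$ ships, i.e. $\frac{1}{j}$ times the maximum-mass independent set of size $j$ in the conflict graph $G$. Deleting the lightest ship from an $A(j)$-optimal selection removes a below-average element and hence cannot lower the average, so $A(j-1)\ge A(j)$ and $A$ is non-increasing. Because the threshold $g(j)\coloneq\frac{j/2}{0.004}$ is strictly increasing, the set of feasible ensemble sizes $\{\,j : A(j)\ge g(j)\,\}$ is downward closed, equal to $\{1,\dots,j^\star\}$ for some $j^\star$. Thus "no ensemble of size $k+1$" is equivalent to $j^\star=k$, and by the existence part it suffices to establish $j^\star\le k$, that is, $A(j)<g(j)$ for every $j>k$.

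To obtain this I would exploit the \emph{row-generation invariant} of Listing~\ref{algo:nships}: each solve returns a maximum-\emph{total}-mass independent set subject to the current cardinality cap, and the loop exits only when such a solution first meets the threshold, after passing through strictly larger ship counts that all failed it. Chaining these certificates — the maximum-mass set of each visited size violates its average bound — with the monotonicity of $A$ and the growth of $g$ is what one would use to pin $A(j)$ below $g(j)$ for all $j>k$.

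The main obstacle, and the step I would scrutinise hardest, is the gap between what the LP certifies (a bound on the maximum \emph{total mass} of a cardinality-bounded independent set) and what the ensemble condition concerns (the maximum \emph{average} over sets of a fixed size, namely $A(j)$). The descent is one-directional: it begins at the global maximum-mass independent set and only ever removes ships, so the argument implicitly requires that this starting set already contain at least $j^\star$ ships. Otherwise a heavier-but-smaller selection can halt the loop above $j^\star$ while a lighter selection of more ships still clears its (higher) threshold — a genuine failure mode under arbitrary positive masses. This is precisely the situation ruled out when the candidate ships carry comparable collected masses, as they do for the pool of promising ships (roughly 580--704\,kg), where maximising mass and maximising cardinality essentially coincide. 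I would therefore either invoke this near-equal-mass regime explicitly, or, for a fully general statement, augment the loop with an upward search so that $j^\star$ is bracketed from above as well as below.
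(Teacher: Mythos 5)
Your existence half matches the paper's one-line argument exactly (feasibility for $\Pi$ gives disjointness, the false \textbf{while}-guard gives the average constraint), and your reduction of the converse via the non-increasing maximum average $A(j)$ and the downward-closed set of feasible sizes is sound and cleaner than anything in the paper. More importantly, the obstacle you refuse to wave away is \emph{genuine}, and it is a hole in the paper's own proof, not a defect of your attempt. The paper argues that a failing intermediate solution with $k'$ ships is mass-optimal, so any ensemble must either collect more mass (impossible) or use fewer than $k'$ ships, and it then cuts with $\sum_i x_i<k'$. This only certifies that no ensemble of size $\geq k'$ exists for each \emph{failing} iterate $k'$; it says nothing about sizes strictly between the returned $k$ and the last failing $k'$ when the mass-optimum under a cardinality cut skips cardinalities, exactly the mass-versus-average mismatch you identified. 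A concrete counterexample with the listing's own test $g(k)=\frac{k/2}{0.004}$: take two groups of pairwise compatible ships, all cross-group pairs conflicting, with group $A$ containing four ships of $490$\,\si{kg} and group $B$ two ships of $950$\,\si{kg}. The first solve returns $A$ (total $1960$, the global mass optimum), which fails since $490<g(4)=500$; after the cut $\sum_i x_i<4$ the optimum is $B$ (total $1900$, beating any triple of $A$ at $1470$), which passes since $950\geq g(2)=250$, so the algorithm outputs $k=2$. Yet any three ships of $A$ form a compatible ensemble of $k+1=3$ ships with average $490\geq g(3)=375$. Hence the lemma is false as stated for arbitrary pools, and your instinct not to push the chaining argument through was correct.

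Two refinements to your closing remarks. First, your sufficient condition (``the starting set already contains at least $j^\star$ ships'') does not suffice: a skip can occur at any later iterate, not only the first, so what is needed is that no iterate jumps over a feasible cardinality. Second, the near-equal-mass regime narrows but does not eliminate skipping: with masses in $[580,704]$\,\si{kg}, a cap-$c$ optimum is only guaranteed to contain at least $\frac{580}{704}c\approx 0.82\,c$ ships, so skips of several cardinalities remain possible (exact equality of masses would close the gap, since then mass- and cardinality-maximization coincide). Your bracketing fix is the right repair; a variant that stays inside the paper's two-pass scheme is to certify each skipped size $j$ with $k<j<k'$ separately, by solving $\Pi$ with the equality constraint $\sum_i x_i=j$ and checking the average test, which establishes $A(j)<g(j)$ precisely where the paper's descent is silent.
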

\begin{proof}
  The first direction is witnessed by $\vec x$ and the fact that it 
  passed the test in Line~\ref{algo:nships:avg}. For the second
  direction, consider the case that the algorithm has an intermediate
  solution $\vec x$ of $k$ ships that fails the test in
  Line~\ref{algo:nships:avg}. Then, any solution that satisfies the
  average constraint either collects more mass than
  $\sum_{i=1}^nx_im_i$ or uses fewer ships. Since $\vec x$ is
  mass-optimal by the definition of $\Pi$ and the computed solutions
  in Line~\ref{algo:nships:optimal1} or
  Line~\ref{algo:nships:optimal2}, there is no feasible solution in
  $\mathcal{S}$ collecting more mass. Hence, the only way to fix the
  average mass is by using fewer ships, which is enforced in
  Line~\ref{algo:nships:incremental} by adding the cardinality
  constraint $\sum_{i=1}^nx_i<k$.
\end{proof}

Once we know the maximum number $k$ of ships, we compute a score-optimal
independent set. Note that by Lemma~\ref{lemma:nships} we do not need to
consider ensembles with more than $k$ ships. However, a score-optimal
independent set may, of course, use fewer ships. The strategy is very similar:
We construct an LP $\Pi'$ that has the same variables and constraints
as $\Pi$, and additionally has the cardinality constraint $\sum_{i=1}^nx_i\leq k$
(for the $k$ we computed with Listing~\ref{algo:nships}), and which has to
maximize the objective function $\sum_{i=1}^nx_is_i$ (note that we replaced
``$m_i$'' by ``$s_i$''). We apply row generation again to find the global
optimum; see Listing~\ref{algo:ensemble} for details. The algorithm differs from
Listing~\ref{algo:nships} only in using the score-optimal LP, and by
excluding only a single solution in lines~\ref{algo:ensemble:index}
and~\ref{algo:ensemble:incremental} (instead of all solutions of size $k$).

\begin{lstlisting}[
  style=pseudocode,
  backgroundcolor=,
  label=algo:ensemble,
  float=htpb,
  caption={An 0-1-LP-based algorithm that computes the score-optimal
    ensemble from a set $\mathcal{S}$ of ships.} 
]
algorithm $\textsf{optimal-ensemble}(\mathcal{S}, k)$
  $\Pi'$ <- $\text{init LP from }\mathcal{S}\text{ and }k$
  $\vec x$ <- $\text{optimal solution of }\Pi$ // score-optimal solution $\label{algo:ensemble:opt1}$
  $k$ <- $\sum_{i=1}^nx_i$
  while $\frac{1}{k}\sum_{i=1}^nx_im_i<\frac{k/2}{0.004}$ do // the solution violates the average-constraint $\label{algo:ensemble:avg}$
    $I$ <- $\{\,i\mid x_i=1\,\}$ // index set of the solution    $\label{algo:ensemble:index}$
    $\Pi'$ <- $\Pi'\cup \sum_{i\in I}x_i<k$ // exclude the solution $\label{algo:ensemble:incremental}$
    $\vec x$ <- $\text{optimal solution of }\Pi$ $\label{algo:ensemble:opt2}$
    $k$ <- $\sum_{i=1}^nx_i$
  return $\vec x$
\end{lstlisting}

\begin{theorem}\label{theorem:ensemble}
  On input of $\mathcal{S}$ and $k$, the algorithm from
  Listing~\ref{algo:ensemble} outputs the ensemble of the highest
  possible score encoded as binary vector $\vec x$.
\end{theorem}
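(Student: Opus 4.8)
\section*{Proof plan for Theorem~\ref{theorem:ensemble}}

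The plan is to exhibit a loop invariant guaranteeing that an (unknown) optimal ensemble is never cut away, so that the LP value at every iteration upper-bounds the optimum, while the termination guard certifies that the returned vector is itself a feasible ensemble. Write $\mathrm{OPT}$ for a score-optimal ensemble, i.e.\ an independent set in the conflict graph $G(\mathcal S)$ that satisfies the average constraint and maximizes $\sum_{i\in\mathrm{OPT}}s_i$; such a set exists since, by Lemma~\ref{lemma:nships}, the input $k$ is the size of some ensemble. The first thing I would record is that $|\mathrm{OPT}|\le k$: by Lemma~\ref{lemma:nships} no ensemble uses more than $k$ ships, so $\mathrm{OPT}$ obeys the cardinality constraint $\sum_i x_i\le k$ that is built into $\Pi'$. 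Being also an independent set, $\mathrm{OPT}$ is feasible for the initial LP $\Pi'$.

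Next I would analyze exactly what a single cut in Line~\ref{algo:ensemble:incremental} removes. When the current iterate $\vec x$ has support $I$ with $|I|=k$, the added constraint $\sum_{i\in I}x_i<k$ forbids precisely those $0$-$1$ points whose support contains $I$. The key observation is that $\mathrm{OPT}$ is never such a point, which I would prove by induction on the iterations. Suppose $\mathrm{support}(\mathrm{OPT})\supseteq I$ for the support $I$ of the iterate $\vec x$ being excluded (which is feasible for, and optimal in, the LP solved at that step, for which $\mathrm{OPT}$ is feasible by the induction hypothesis). If the inclusion is strict, then positivity of the scores ($s_i\in\mathbb R^+$) gives $\sum_{i\in\mathrm{OPT}}s_i>\sum_{i\in I}s_i$, contradicting optimality of $\vec x$ against the feasible point $\mathrm{OPT}$. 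If the inclusion is an equality, then $\mathrm{OPT}=\vec x$ as $0$-$1$ vectors, which is impossible because $\vec x$ was excluded exactly for failing the average constraint whereas $\mathrm{OPT}$, being an ensemble, satisfies it. Hence each cut leaves $\mathrm{OPT}$ feasible, and $\mathrm{OPT}$ survives in $\Pi'$ at every iteration. (The same positivity argument also confirms the remark in the text that a cut eliminates only the single current solution, since no other feasible point can contain $I$ in its support.)

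Given this invariant the conclusion is short. At the top of each iteration $\vec x$ is optimal for the current $\Pi'$, and since $\mathrm{OPT}$ is feasible for it we have $\sum_i x_i s_i\ge\sum_{i\in\mathrm{OPT}}s_i$ throughout the run. Termination holds because every iteration strictly shrinks the finite set of feasible $0$-$1$ points (removing at least the current $\vec x$) while never rendering it empty ($\mathrm{OPT}$ always remains), so the loop runs at most $2^n$ times. When it exits, the guard in Line~\ref{algo:ensemble:avg} is false, so the returned $\vec x$ meets the average constraint; being in addition an independent set of at most $k$ ships, it is a genuine ensemble, whence $\sum_i x_i s_i\le\sum_{i\in\mathrm{OPT}}s_i$ by maximality of $\mathrm{OPT}$. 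Combining the two inequalities yields $\sum_i x_i s_i=\sum_{i\in\mathrm{OPT}}s_i$, so the output is a score-optimal ensemble.

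I expect the main obstacle to be the second step: verifying that the weak no-good cut $\sum_{i\in I}x_i<k$ never discards the true optimum even though, syntactically, it forbids an entire family of supports (all supersets of $I$) rather than one point. The argument leans on strict positivity of the $s_i$ together with the score-optimality of the iterate being excluded, and one must separately dispatch the degenerate equality case via the fact that the excluded iterate violates the very constraint $\mathrm{OPT}$ satisfies. A secondary point to state carefully is the appeal to Lemma~\ref{lemma:nships} for $|\mathrm{OPT}|\le k$, which is what makes the fixed cardinality constraint in $\Pi'$ harmless.
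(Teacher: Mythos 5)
Your proof is correct and takes essentially the same route as the paper's: both hinge on the observation that strict positivity of the scores $s_i$ means no feasible point of $\Pi'$ can properly contain the support of a score-optimal iterate, so each cut $\sum_{i\in I}x_i<k$ eliminates only the current (average-constraint-violating) solution and never a true optimum. Your explicit loop invariant, the induction showing the optimum survives every cut, and the $2^n$ termination bound merely formalize what the paper compresses into ``repeatedly applying the argument.''
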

\begin{proof}
  By Lemma~\ref{lemma:nships}, it is safe to work with $\Pi'$. Clearly,
  the output of the algorithm of Listing~\ref{algo:ensemble} is a
  solution since it is a solution of $\Pi'$ and since it passes the
  average constraint in Line~\ref{algo:ensemble:avg}. To see that the
  solution is optimal, first observe that the initial $\vec x$
  computed in Line~\ref{algo:ensemble:opt1} is an upper bound on the
  solution since it is score-optimal. However, it may not be a feasible
  solution since it may violate the average constraint. Note that
  the best solution that satisfies the average constraint may use the
  same amount of ships as $\vec x$ since $\Pi'$ only
  optimizes for the score.

  However, if $\vec x$ fails the test in Line~\ref{algo:ensemble:avg},
  we can conclude that no solution is a \emph{superset}
  of (the set described by) $\vec x$. That is because every ship has a
  positive score and, hence, adding more ships
  would increase the score, contradicting the fact that
  $\vec x$ was already score-optimal. Hence, by defining $k\coloneq\sum_{i=1}^nx_i$ and $I\coloneq\{\,i\mid
  x_i=1\,\}$, and by adding the constraint $\sum_{i\in I}x_i<k$ to
  $\Pi'$, we exclude exactly one solution, namely $\vec x$. Repeatedly
  applying the argument to the solution computed in
  Line~\ref{algo:ensemble:opt2} establishes the claim of the theorem.
\end{proof}

We remark that we could immediately compute an ensemble using the
algorithm from Listing~\ref{algo:ensemble} (without using Listing~\ref{algo:nships}), i.\,e., by setting
$k=n$. However, there may be exponentially many solutions to $\Pi'$, while
Listing~\ref{algo:nships} terminates after at most $n$ rounds. In
general, it is, thus, notably faster to run both algorithms in sequence,
as described above. During the competition, we solved the integer
linear programs in lines~\ref{algo:nships:optimal1}
and~\ref{algo:nships:optimal2} in Listing~\ref{algo:nships}, and in
lines~\ref{algo:ensemble:opt1} and~\ref{algo:ensemble:opt2} in
Listing~\ref{algo:ensemble} using the open-source solver
SCIP~\cite{BestuzhevaEA23}.

\subsection{Automatic Removal of Asteroids}\label{sec:splicing}

Some of the ships $(m,s,A)$ computed in Section~\ref{sec:phoenix} had the
unfortunate property that they harvested just the bare minimum of mass from an
asteroid $a\in A$ that was heavily penalized by the bonus function. This
property is deplorable if \emph{(i)} the ship is in general good ($s$ and $m$
are large), and \emph{(ii)} there is another good ship $(m',s',A')$ with $A\cap
A'=\{a\}$. In this case, we would eventually like to add both ships to our final solution while removing $a$ from one of the trajectories and gluing the, now broken, branches. Note that it is
generally not easy to automatically change a ship's trajectory to include other
asteroids, but removing one from the trajectory is trivial. We can adapt 
$\Pi'$ to find removable asteroids as follows: We keep the indicator
variables $x_i$ for every ship $\sigma_i\in\mathcal{S}$ and introduce additional
binary variables $x_i^a$ for every asteroid $a\in A_i$ (with the semantic
``remove asteroid $a$ from ship $\sigma_i$''). For every pair of ships
$\sigma_i,\sigma_j\in\mathcal{S}$ with $i\neq j$ and $A_i\cap A_j\neq\emptyset$
we replace the constraints $x_i+x_j\leq 1$ by the following constraints for
every $a\in A_i\cap A_j$:
\[
  -x_i + -x_j + x_i^a + x_j^a\geq -1.
\]
This constraint can be read as: ``Either ship $\sigma_i$ or ship $\sigma_j$ is not in the
ensemble, or asteroid $a$ is removed from at least one of the ships''. Of
course, if we remove an asteroid, the score obtained by a ship is reduced. Let
$r_i^a$ denote the score that ship $\sigma_i$ obtains from harvesting asteroid $a$,
then the new objective function is:
\[
  \mathrm{maximize}\, \sum_{i=1}^n x_i s_i - \sum_{i=1}^n\sum_{a\in A_i}x_i^a r_i^a.
\]
We can update Listing~\ref{algo:ensemble} by replacing $\Pi'$ with the
LP described above. The output will then be a score-optimal ensemble
(described by the values of the $x_i$) together with the precise information
on which asteroids need to be removed from which ships (encoded by the $x_i^a$).

\section{The Final Assembled Solution}
\label{sec:final_assembled_solution}
Figure~\ref{fig:nabu.png}, kindly provided by the competition organizers \cite{gtoc12} gives an overview of our team's final solution found applying the selection procedure described in Sect.~\ref{sec:patch} to the pool of 371 ships able to return more than 580\si{kg} from the various runs of the search algorithms outlined in Sect.~\ref{sec:oneship} and a local refinement step applied to adjust the various epochs as to maximize the overall collected mass of each self-contained ship. The solution consists of 28 ships collectively mining $18\,475\,\mathrm{kg}$ of material from 249 asteroids, resulting in a total score of $15\,728$ (including bonus coefficients). The ship selection algorithm, when ignoring the bonus coefficient, returns a different selection corresponding to a slightly better solution at $18\,562\,\mathrm{kg}$, signaling the robustness and diversity of our final ship pool to the solutions found by other teams during the competition.

\begin{figure}[htb]
  \centering  \includegraphics[width=\textwidth,clip=true, trim=16 26 38 25]{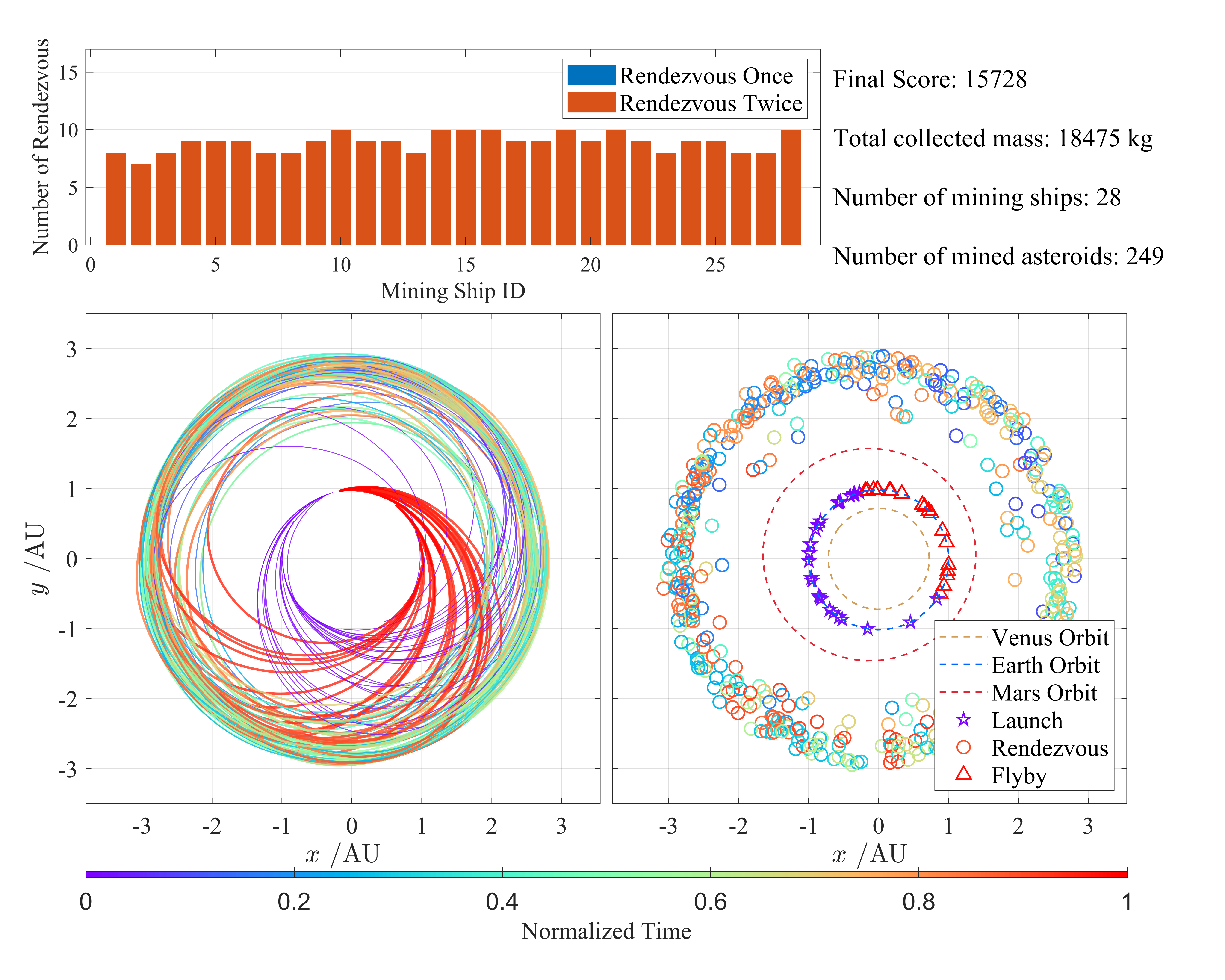}
  \caption{Top: Number of rendezvous with an asteroid per ship. Since our solution is comprised solely of self-contained ships, each asteroid is rendezvoused twice by the same ship. Bottom: Projection of the trajectories of our solution (left) and of events (right) into the ecliptic. The temporal dimension is colour coded. Source: \url{https://github.com/GTOC12-official/}.}
\label{fig:nabu.png}
\end{figure}

As mentioned previously, a consequence of our search procedure (described in Sect.~\ref{sec:oneship}) is that all ships are self-contained, in that each of them collects material from the same asteroids on which they deployed miners. Each ship performs rendezvous with seven to ten asteroids twice during the mission. The bird's-eye view of the trajectories in the $xy$ plane in Fig.~\ref{fig:nabu.png} showcases that most mining activity occurs in the high asteroid density region between $2.7$ and $3\,\mathrm{AU}$ (see Fig.~\ref{fig:asteroids_3d_density}). Some of the outliers in this respect correspond to ships resulting from targeted searches on thin elliptical or inclined torus subsets of $\mathcal A$ (Sect.~\ref{sec:patch}). Of note is that none of the 28 selected ships made use of a Mars or Venus gravity assist.

Table~\ref{table:nabu} lists for each of the 28 ships constituting our solution their contribution to the final score, the collected mass, the set of visited asteroids, the search algorithm that produced them, and the associated figure.
\begin{table}[htbp]
\caption{Properties of the 28 ships of our solution: the score, the collected mass, the number of visited asteroids, and the source algorithm that produced them. ‘Scaffolding’ and ‘beam search’ refer to the algorithms of Sects.~\ref{sec:scaffolding} and~\ref{sec:phoenix}, respectively. An asterisk marks targeted searches on small subsets of $\mathcal A$, as discussed in Sect.~\ref{sec:patch}.}
\label{table:nabu}
\centering
\setlength{\tabcolsep}{12pt} 
\renewcommand{\arraystretch}{1.0}
\begin{tabular}{rcccll}
 \emph{Ship} & \emph{Score} & \emph{Mass} & \emph{Asteroids} & \emph{Algorithm} & \emph{Figures} \\[1.5ex]
 1 & 542 & 649 kg & 8 & scaffolding \\
 2 & 549 & 613 kg & 7 & scaffolding \\
 3 & 565 & 659 kg & 8 & beam search \\
 4 & 554 & 687 kg & 9 & beam search \\
 5 & 586 & 586 kg & 9 & beam search* \\
 6 & 573 & 698 kg & 9 & beam search \\
 7 & 543 & 653 kg & 8 & scaffolding \\
 8 & 595 & 610 kg & 8 & beam search* \\
 9 & 587 & 699 kg & 9 & beam search & Figure~\ref{fig:r_of_t_plots} middle \\
 10 & 537 & 659 kg & 10 & beam search \\
 11 & 541 & 645 kg & 9 & beam search \\
 12 & 562 & 695 kg & 9 & beam search \\
 13 & 538 & 653 kg & 8 & beam search \\
 14 & 565 & 670 kg & 10 & beam search \\
 15 & 560 & 686 kg & 10 & beam search \\
 16 & 564 & 643 kg & 10 & beam search \\
 17 & 567 & 661 kg & 9 & beam search \\
 18 & 567 & 661 kg & 9 & beam search \\
 19 & 547 & 655 kg & 10 & beam search \\
 20 & 568 & 683 kg & 9 & beam search \\
 21 & 570 & 697 kg & 10 & beam search \\
 22 & 618 & 618 kg & 9 & beam search* & Figure~\ref{fig:r_of_t_plots} top \\
 23 & 556 & 682 kg & 8 & scaffolding & Figures~\ref{fig:scaffold_example} and~\ref{fig:r_of_t_plots} bottom \\
 24 & 565 & 660 kg & 9 & beam search \\
 25 & 559 & 666 kg & 9 & beam search \\
 26 & 555 & 641 kg & 8 & scaffolding \\
 27 & 545 & 653 kg & 8 & beam search \\
 28 & 555 & 693 kg & 10 & beam search
\end{tabular}
\end{table}
In total five of the ships selected were produced by the scaffolding algorithm (Sect.~\ref{sec:scaffolding}) and 23 by the beam search (Sect.~\ref{sec:phoenix}). This discrepancy should however not be interpreted in terms of algorithm quality, since we also launched a disproportionately larger amount of searches with the beam search.

In the top panel of Fig.~\ref{fig:r_of_t_plots} we take a closer look at our ship with the highest score of 618 by means of an $r(t)$ plot.
\begin{figure}
  \centering  \includegraphics[width=0.85\textwidth]{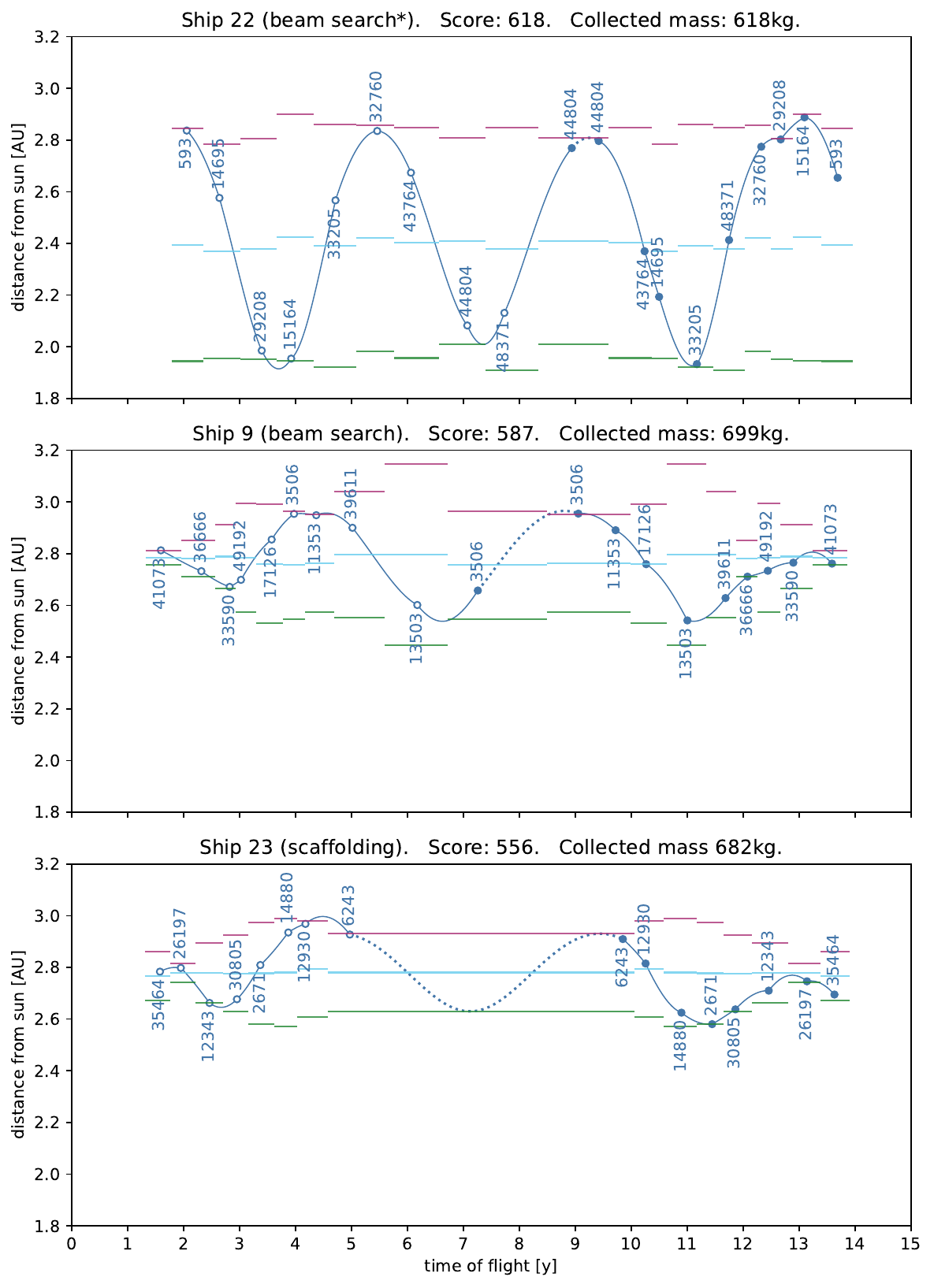}
  \caption{Plots of $r(t)$ of the highest scoring ship (top) and of the highest mass returning ships produced by the beam search (middle) and scaffolding algorithm (bottom).
  Full/empty circles indicate deployment/pickup rendezvous with the labeled asteroids. Transfers are plotted solid, coastings on an asteroid dotted. The semi-major axis, the pericentre distance, and the apocentre distance of the rendezvoused asteroids are indicated by the blue, green, and red bars respectively. The chosen range on the distance axis is common to all three plots to ease a direct comparison of the geometries of the trajectories.}
\label{fig:r_of_t_plots}
\end{figure}
It is a product of a beam search on a thin elliptical torus subset of $\mathcal A$ around the orbit of the initially rendezvoused asteroid. Due to this restricted search space, all rendezvoused asteroids are on orbits with similar characteristics, and the ship finds smooth transitions between them. As discussed at the beginning of Sect.~\ref{sec:patch}, the search space restriction was an ad-hoc attempt to limit the impact of bonus coefficients by forcing the beam search to explore trajectories which it would otherwise rate unfavourably (in this case elliptical ones) and indeed the collected mass for this ship translates unaffected into score. It is important to stress however, that such ships could not have made up the bulk of our solution, since their collected mass is low. Higher mass collecting ships were required to allow for a 28 ship solution and avoid bottlenecking.

The middle panel of Fig.~\ref{fig:r_of_t_plots} shows the $r(t)$ plot of the ship of our solution with the highest mass return of $699\,\mathrm{kg}$, which is a product of the beam search. Here the unconstrained search space allowed the algorithm to find and efficiently mine the high density region just below $3\,\mathrm{AU}$ (Fig.~\ref{fig:asteroids_3d_density}). Apart from outliers, the rendezvous cluster in the beginning (deployment) and end phases (collection) of the mission are interpreted as consequences of the look-ahead heuristics of the beam search.

Finally, the bottom panel of Fig.~\ref{fig:r_of_t_plots} shows the $r(t)$ plot of the highest mass returning ship produced by the scaffolding algorithm. Also, this ship was attracted to efficiently mine in the high density region. By construction, the asteroids are rendezvoused approximately symmetrically around a central period in which the ship is coasting on an asteroid. The algorithm seeks to maximise this coasting period, and to achieve this, the ship invests in relatively expensive thrusts.

What all the above examples have in common is a low fluctuation in semi-major axis, and hence also in period, of the rendezvoused asteroids. This is in part a consequence of the energy-efficient transfers found by the algorithms as well as of the fact that similar periods ease the opportunities for revisits, in particular for near-circular trajectories.

The two highest mass returning ships that were found during the competition collected $704\,\mathrm{kg}$ of ore. These ships, however, were not selected for the submitted solution due to the presence of highly desirable and targeted asteroids in them, resulting in a harsh penalty induced by bonus coefficients.

\section{Conclusions}
The low-thrust multiple-rendezvous trajectories designed for the asteroid mining mission featured in the 12th edition of the Global Optimization Competitions can be well approximated by a first-order correction to their purely ballistic representation. This approximation offers significant efficiency advantages during the exploration of the vast combinatorial search space. It enables the avoidance of solving complex optimal control problems and instead facilitates the use of Lambert's arc computations for the trajectory analysis. Furthermore, we found that ignoring the collaborative aspect of the mining mission and focusing on the design of self-sufficient ships, mining and collecting all resources, allows to find ships returning $> 700$ kg of mined resources, albeit not consistently.

\section{Acknowledgment}
The authors wish to extend our sincere gratitude to Carlo Cavazzoni, Alessandro Russo, Sciarappa Antonio, and Franco Ongaro for their assistance in facilitating access to the Leonardo computing infrastructure in the weeks leading up to the competition. Although it turned out to be unnecessary for the particular strategy we devised, we nevertheless appreciated the opportunity and support provided.

\bibliographystyle{unsrt}
\bibliography{main}

\end{document}